\def\tr{\text{\rm tr}}
\def\vec{\text{\rm vec}}
\def\Cov{\text{\rm Cov}}
\def\Var{\text{\rm Var}}
\def\KL{\text{\rm KL}}
\def\N{\text{N}}
\def\argmin{\text{\rm argmin}}
\def\diag{\text{\rm diag}}
\def\vech{\text{\rm vech}}
\def\bfd{{\bf d}}
\def\bfg{{\bf g}}
\def\bfn{{\bf n}}
\def\bfx{{\bf x}}
\def\bfy{{\bf y}}
\def\bfw{{\bf w}}
\def\bfu{{\bf u}}
\def\bfpi{{\pmb \pi}}
\def\dt{{\mathrm{d}}}
\newcommand{\wh}{\widehat}
\def\Mn{{\mathsf{M}}} 
\def\Ln{{\mathsf{L}}}
\def\cut{{\mathrm{cut}}}
\def\ls{{\text{ls}}}
\def\dv{{\text{dv}}}
\def\E{{\mathbb{E}}}
\def\PG{{\text{PG}}}
\def\P{\text{P}}
\newcommand\mL{{\mathcal{L}}}
\def\mcF{{\mathcal{F}}}
\def\Mn{{\mathsf{M}}} 
\def\Ln{{\mathsf{L}}} 
\newtheorem{assumption}{Assumption} 
\newtheorem{cor}{Corollary}
\newtheorem{rem}{Remark}
\newtheorem{nonitalicthm}{Theorem}
\begin{document}

\title{Optimization-centric Cutting Feedback \\ for Semiparametric Models}

\author{\name Linda S. L. Tan \email statsll@nus.edu.sg \\
       \addr Department of Statistics and Data Science \\
       National University of Singapore\\
       Singapore 117546, Singapore
       \AND
       \name David J. Nott \email standj@nus.edu.sg \\
       \addr Department of Statistics and Data Science \\
       National University of Singapore\\
       Singapore 117546, Singapore
	   \AND
	   \name David T. Frazier \email david.frazier@monash.edu \\
	   \addr Department of Econometrics and Business Statistics \\
	   Monash University \\
	   Victoria 3800, Australia}

\editor{My editor}

\maketitle

\begin{abstract}
Complex statistical models are often built by combining multiple submodels, called modules. Here we consider modular inference where the modules contain both parametric and nonparametric components. In such cases, standard Bayesian inference can be highly sensitive to misspecification in any module, and influential prior specifications for the nonparametric components can compromise inference for the parametric components, and vice versa. We propose a novel ``optimization-centric'' approach to cutting feedback for semiparametric modular inference, which can address misspecification and prior-data conflicts. The proposed cut posteriors are defined via a variational optimization problem like other generalized posteriors, but regularization is based on  R\'{e}nyi divergence, instead of Kullback-Leibler divergence (KLD). We show empirically that defining the cut posterior using R\'{e}nyi divergence delivers more robust inference than KLD, and R\'{e}nyi divergence reduces the tendency to underestimate uncertainty when the variational approximations impose strong parametric or independence assumptions. Novel posterior concentration results that accommodate the R\'{e}nyi divergence and allow for semiparametric components are derived, extending existing results for cut posteriors that only apply to KLD and parametric models. These new methods are demonstrated in a benchmark example and two real examples:  Gaussian process adjustments for confounding in causal inference and misspecified copula models with nonparametric marginals.
\end{abstract}

\begin{keywords}
Bayesian nonparametrics, Cutting feedback, Optimization-centric, generalized Bayesian inference,  Variational inference
\end{keywords}

\section{Introduction}\label{sec-intro}

Complex Bayesian models are often specified by joining together submodels for different data sources, called modules. However, if some modules are misspecified, conventional Bayesian inference becomes unreliable. {\em Cutting feedback} \citep{Plummer2015, Jacob2017} modifies standard Bayesian inference so that a misspecified module does not contaminate inference for parameters in correctly specified modules. Recently, \cite{Frazier2024} developed cutting feedback methods in a generalized Bayesian framework \citep{Bissiri2016} for parametric models, where the negative log-likelihoods in misspecified modules are replaced by general loss functions, but retained for well-specified modules. This provides a flexible approach to model expansion when some module likelihoods are misspecified.

The ability of cutting feedback methods to mitigate the impact of misspecification in parametric Bayesian models has led to their application across diverse fields, such as pharmacokinetic and pharmacodynamic models \citep{Bennett2001, Zhang2003a, Zhang2003b}, air pollution studies \citep{Blangiardo2011}, causal inference \citep{Zigler2013} and archaeological studies \citep{Styring2017}. Cutting feedback was first implemented using modified Gibbs or Metropolis-within-Gibbs posterior sampling, where some model terms are omitted when forming conditional distributions in MCMC, to mitigate the effects of misspecified model terms on inference for some parameters. However, such implicit definitions of cut posteriors make it difficult to understand the effects of cuts. \cite{Plummer2015} studied a two-module system, which defines cutting feedback explicitly and can be used in many important applications. \cite{Yu2023} formulate the joint cut posterior in this two-module system as the solution to a constrained variational optimization problem, and \cite{Smith2025} extend this approach to misspecified copula models. \cite{Song2025} propose variational inference based on normalizing flows for the second module that relies only on samples from the first module to reduce approximation error. \cite{Liu2025} extend the two-module system of \cite{Plummer2015} to an arbitrary number of modules. Partial cuts are considered in semi-modular inference \citep{Carmona2020}, where a degree of influence parameter interpolates between the cut and conventional posterior, and are optimal in mildly misspecified models \citep{chakraborty2023modularized, frazier2023accurate}.

In this article we study modular models that contain both parametric and nonparametric components, in contrast to much of the existing literature. Modular semiparametric models have been used in several applications, such as computer model calibration \citep{Bayarri2009}, semiparametric hidden Markov models \citep{Moss2024} and localized geographic regression \citep{Liu2024}. However, cutting feedback with nonparametric components presents two challenges that must be overcome to ensure effectiveness. First, in modules with both parametric and nonparametric components, prior specification is crucial and priors that deliver reliable inference on one component in one module may deliver poor inferences for parameters in other modules. For instance, in hidden Markov models, \cite{Moss2024} exhibit a scenario where useful priors for parametric components deliver poor inferences on nonparametric components, while the reverse situation is a well-known issue in semiparametric Bayesian models \citep{rivoirard2012posterior}.  Second, the same influential prior choices can negatively affect inferences within modules even if the likelihood is correctly specified, so that it is necessary to diagnose and control the effects of any prior-data conflict \citep{evans2006}. We show that both issues can be overcome by building on \cite{Knoblauch2022} to develop an \textit{optimization-centric cut posterior} (OCCP).

Our work makes three main contributions. First, we extend the framework of \cite{Frazier2024} by proposing an optimization-centric generalized Bayes formulation of cutting feedback. We show that any cut posterior in a two-module system can be formulated as the solution to a variational optimization that depends on the choice of: (1) the loss for each module; (2) the learning rates that balance information in the loss relative to the prior; and (3) the divergence measure between prior and posterior that includes Kullback-Leibler divergence (KLD) as a special case. Our OCCP unites the definition of \cite{Plummer2015} with the variational formulation of likelihood-based cut posteriors in \cite{Carmona2021} and \cite{Yu2023}, while extending \citet{Frazier2024} to divergences beyond KLD. Our second main contribution is to show that OCCPs can mitigate the risks of inappropriate prior specification and computational approximations in complex models. Our empirical results indicate that R\'{e}nyi divergence based OCCPs can reduce the impact of prior-data conflicts, and yield variational approximations that quantify uncertainty more accurately than those based on KLD. We apply the OCCPs to several parametric and semiparametric examples, including Gaussian process (GP) adjustments for confounding in causal inference, and nonparametric marginal estimation in misspecified copula models. Finally, we present theoretical results that greatly extend existing results for general cut posteriors, by allowing for semiparametric settings where some modules may contain infinite dimensional parameters, considering the broader class of R\'{e}nyi divergence beyond KLD, and developing results for variational approximations of the OCCPs.

To motivate our approach, it is helpful to consider a biased means example discussed in \cite{Bayarri2009}. Although this example is parametric, we consider semiparametric variants of it later in Section 5 involving functional parameters. In the example of \cite{Bayarri2009} there is a small sample $z=(z_1,\dots, z_{n_1})$ from a reliable data source with mean $\varphi$, and a large sample $w=(w_1,\dots w_{n_2})$ from a biased data source with mean $\varphi+\eta$, where $\eta$ is the bias. Interest centers on the mean parameter $\varphi$, and the dilemma is whether to use the large sample of biased data $w$ to supplement information from $z$.  \cite{Bayarri2009} show that this is detrimental when the prior for $\eta$ is misspecified, and is not beneficial even if the prior is well-specified. Cutting feedback is more attractive than standard Bayesian inference in this example. In Section \ref{sec biased mean}, we consider misspecified priors not only for $\eta$ but $\varphi$ as well. In this case, conventional cutting feedback fails, but the OCCPs provide reliable inference. Figure 1 shows, for a simulated data set, the true posterior for $(\varphi,\eta)$, conventional KLD based cut posterior, and an OCCP based on R\'{e}nyi's $\alpha$ divergence. When $\alpha$ is small, the true value is in the high probability region of the OCCP, but both the true and conventional cut posterior provide misleading inference. This example is discussed further in Section \ref{sec biased mean}.  When $\varphi$ or $\eta$ are infinite-dimensional parameters,
the risks that come with influential prior specifications are greater, as well
as the challenges of cut posterior computations, as discussed in the examples
of Section 5.

\begin{figure}[tb!]
\centering
\includegraphics[width=0.8\textwidth]{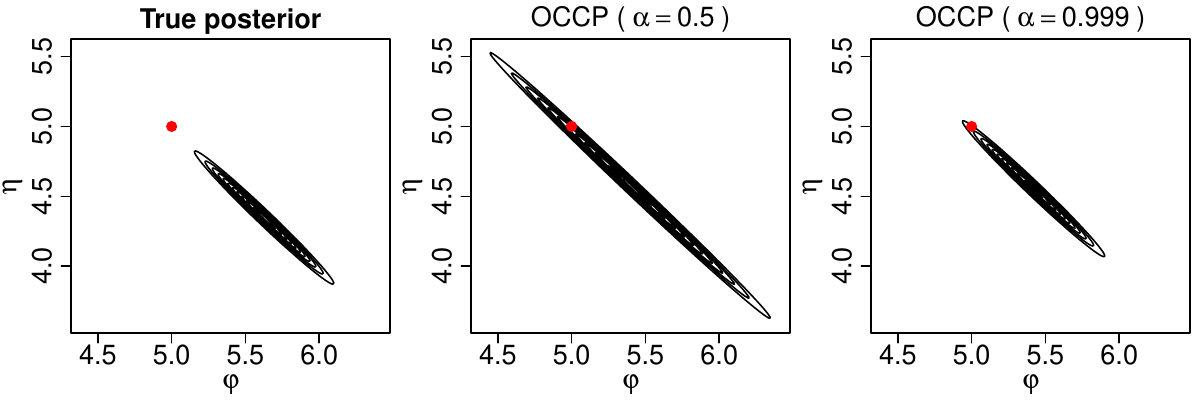}
\caption{\label{Fig1} Contour plots of joint posteriors for $\theta=(\varphi,\eta)$, with priors $\varphi\sim \N(10,1)$ and $\eta\sim \N(0,1)$. True value of $\theta$ is marked in red. Both the true and conventional cut posterior (approximated with $\alpha=0.999$) provide poor inference.}
\end{figure}

In Section \ref{sec cutting feedback}, we define and motivate cutting feedback methods, describe the generalized Bayesian approach to cutting feedback, and extend it to optimization-centric generalized Bayes. Section \ref{sec: examples} considers some examples where OCCPs can be applied to derive inference that is more robust to model misspecification than standard Bayesian inference. Section \ref{sec posterior concentration} presents theoretical posterior concentration results for OCCPs, which apply to infinite-dimensional parameters and semiparametric modular inference. Our approach is illustrated using two examples, on removing hidden confounding in a GP model for an observational study in Section \ref{sec hidden confounding}, and fitting a misspecified copula model with nonparametric marginals in Section \ref{sec misspecified copula model}. Section \ref{sec conclusion} concludes and gives some directions for future work.

\section{Cutting feedback} \label{sec cutting feedback}
Complex models can often be decomposed into submodels or modules. Bayesian inference is usually based on posterior distributions derived from Bayes' Theorem, but in multi-modular settings it may be desirable to modify the conventional posterior by omitting certain terms. Possible reasons include preventing a misspecified module from corrupting a well-specified module, lack of identifiability, modular development and computational complexity \citep{Bayarri2009}. Prevention of information flow from some modules to others is known as {\em cutting feedback}.

Throughout this article, we consider the two-module system of \cite{Plummer2015}, which has modules for two data sources, $z = (z_1, \dots, z_{n_1})^\top$ and $w = (w_1, \dots, w_{n_2})^\top$. While $z$ is connected to a parameter $\varphi \in \Phi$ in the first module, $w$ is connected to $\varphi$ and another parameter $\eta \in \mathcal{E}$ in the second module. Let $\theta=(\varphi^\top, \eta^\top)^\top$ and $p(\theta) = p(\varphi)p(\eta \mid \varphi)$ be the prior. Writing $y=(z^\top,w^\top)^\top$ for the full data, the joint likelihood is $p(y\mid \theta)=p(z \mid \varphi)p(w \mid \theta)$, where $p(z \mid \varphi)$ and $p(w\mid \theta)$ are likelihoods for the first and second module respectively. From Bayes' theorem, the joint posterior is $p(\theta \mid y) = p(\varphi \mid y) p(\eta \mid  \varphi, w)$.

Suppose we wish to prevent information flow from $w$ to $\varphi$ in the posterior, due to computational complexity or to avoid contaminating the inference of $\varphi$ with ``bad'' data $w$. \cite{Plummer2015} defined the {\em cut joint posterior} as
\begin{align}\label{cutpost} 
p_\cut (\theta \mid y) = p(\varphi \mid   z)p(\eta \mid  \varphi, w),
\end{align}
which replaces $p(\varphi \mid y)$ with $p(\varphi \mid z)$ in $p(\theta \mid y)$. Uncertainty about $\varphi$ is still propagated in inference for  $\eta$. As $p(\varphi \mid  y) \propto p(\varphi \mid  z) \int p(w \mid  \theta)  p (\eta \mid \varphi) \, \dt\eta$, the integral representing influence of $w$ on $p(\varphi \mid  y)$ is dropped to yield the cut posterior.

\subsection{Optimization-centric generalized Bayesian inference} \label{Sec OC gen Bayes}
Conventional Bayesian updating can be formulated as the solution of a variational optimization problem \citep{zellner88}. Building on this perspective, \cite{Knoblauch2022} develop a form of generalized Bayesian inference that defines a generalized posterior by modifying the optimization problem defining the usual Bayesian update. This modification can achieve robustness to model misspecification and enhance computational scalability. To explain further, let $y=(y_1,\dots, y_n)^\top$ be the data, $\theta$ be a parameter of interest with prior $p(\theta)$ and $R(y \mid \theta)=\sum_{i=1}^n r_i(y_i|\theta)$ be a loss function.  As \cite{Knoblauch2022} do not consider modular structures, $y$ is not split into module-specific data sources. They consider a class $\mathcal{F}$ of candidate posteriors, and a divergence $\mathcal{D}(q\|p)$ measuring deviation of $q\in \mathcal{F}$ from the prior. The optimization-centric posterior of $\theta$ is defined as
\begin{equation} \label{ocpost1}
p^*(\theta \mid y) = \argmin_{q(\theta)\in\mathcal{F}} \quad
\lambda \, \E_{q(\theta)}\{R(y \mid \theta)\}
+\mathcal{D}(q\|p),  
\end{equation}
where $\lambda>0$ is a learning rate to scale the loss suitably relative to the divergence, with $\lambda=1$ chosen in \cite{Knoblauch2022}. Minimizing \eqref{ocpost1} compromises between minimizing the expected loss and making $q\in\mathcal{F}$ close to the prior. If $\lambda=1$, $R(y\mid\theta)$ is the
negative log-likelihood, $\mathcal{F}$ is unrestricted and $\mathcal{D}(q\|p) = \E_{q(\theta)}\left\{\log q(\theta)-\log p(\theta)\right\}$ is the KLD, then 
\eqref{ocpost1} is the conventional posterior.

Specification of the loss, divergence and admissible posteriors in \eqref{ocpost1} address three issues in standard Bayesian inference. First, if framing a probabilistic model for the data is difficult, we can replace the negative log-likelihood with a loss function that reflects our inferential goals. Second, if the prior has conflict with information about $\theta$ in the loss function, a robust divergence can downweight prior influence. Third, if computation is intractable, we can choose $\mathcal{F}$ so that the optimization problem is easier to solve. When $\mathcal{F}$ is unrestricted and $\mathcal{D}(q\|p)$ is the KLD, the solution to \eqref{ocpost1} is
\begin{align}\label{gibbspost}
p^*(\theta \mid y) & \propto p(\theta)\exp\left\{-\lambda R(y \mid \theta)\right\},  
\end{align}
which is the Gibbs posterior \citep{Bissiri2016} often studied in PAC-Bayesian generalization bounds \citep{Alquier2024}.  Setting $\lambda=1$ and $R(y\mid \theta)$ as the negative log-likelihood recovers the conventional posterior.

In the two-module system with $\theta = (\varphi^\top,\eta^\top)^\top$, \cite{Carmona2020} observed that the cut posterior in \eqref{cutpost} can be interpreted as a Gibbs posterior with 
\[
R(y \mid \theta)= -\log p(z \mid \varphi)-\log p(w \mid \theta) + \log p(w \mid \varphi)
\]
and $\lambda=1$. This is immediate by plugging $R(y \mid \theta)$ into \eqref{gibbspost}. \cite{Frazier2024} extend the concept of cut posterior to a {\em generalized cut posterior} by replacing individual module negative log-likelihoods by loss functions. They employ the KLD, $\lambda=1$, an unrestricted class $\mathcal{F}$, and the loss function
\begin{align*}
R(y \mid \theta) = \lambda_1 L(z\mid \varphi)
+ \lambda_2 M(w \mid \theta) + \log m_\eta(w \mid \varphi),
\end{align*}
where $L(z \mid  \varphi) = \sum_{i=1}^{n_1} \ell_i(z_i \mid \varphi)$, $M(w\mid \theta) = \sum_{j=1}^{n_2} m_j(w_j\mid \theta)$, $\lambda_1,\lambda_2>0$ are learning rates, one for each module-specific loss function, and $m_\eta(w \mid \varphi) =\int p(\eta \mid \varphi)\exp \{-\lambda_2 M(w\mid \theta)\}\, \dt \eta$ is a normalizing constant for $p^*(\eta\mid \varphi,w)$ defined in \eqref{mfn} below. Plugging $R(y\mid\theta)$ into \eqref{gibbspost}, the generalized cut posterior is  
$p^*(\theta\mid y) = p^*(\varphi \mid z)p^*(\eta \mid \varphi,w)$, where $p^*(\varphi\mid z) \propto p(\varphi)\exp\left\{-\lambda_1 L(z\mid \varphi)\right\}$ and 
\begin{align} \label{mfn}
p^*(\eta\mid \varphi,w) = \dfrac{p(\eta \mid \varphi)\exp\left\{-\lambda_2 M(w\mid \theta)\right\}}{m_\eta(w \mid \varphi)}.
\end{align}

\subsection{Optimization-centric cut posterior} \label{Sec OC cut posterior}
In \cite{Frazier2024}, KLD is used to measure the distance of $q\in \mathcal{F}$ from the prior. We now propose an optimization-centric cut posterior (OCCP), 
\[
p_\cut^* (\theta \mid y) = p_\cut^*(\varphi \mid z) 
p^*(\eta \mid \varphi, w) ,
\]
which allows the use of other divergences by directly defining an optimization-centric marginal posterior for $\varphi$ and conditional posterior for $\eta$ given $\varphi$, where
\begin{align} \label{cut_varphi}
p_\cut^*(\varphi \mid z) &= \argmin_{q(\varphi) \in \mcF_\varphi} \;\; \lambda_1 \, \E_{q(\varphi)} \{L(z \mid  \varphi)\} + \mathcal{D}_\varphi\{ q(\varphi) \| p(\varphi) \} , \\
p^*(\eta \mid  \varphi, w) &= \argmin_{ q(\eta \mid \varphi) \in \mcF_{\eta \mid  \varphi} } \; \lambda_2 \, \E_{q(\eta \mid \varphi)} \{ M(w \mid  \theta) \} +  \mathcal{D}_{\eta \mid \varphi} \{ q(\eta \mid  \varphi) \| p(\eta\mid \varphi) \}. \nonumber
\end{align}
In \eqref{cut_varphi}, $\mathcal{D}_\varphi$ and $\mathcal{D}_{\eta \mid \varphi}$ are divergence measures while $\mcF_\varphi$ and $\mcF_{\eta \mid \varphi}$ are classes of densities to optimize over. As in  \cite{Frazier2024}, we introduce two learning rates $\lambda_1,\lambda_2 > 0$ to scale the expected loss relative to the divergence in each module. If loss functions in \eqref{cut_varphi} are negative log-likelihoods for a probabilistic model, KLD is used and $\lambda_1=\lambda_2=1$, then the solutions to \eqref{cut_varphi} will be the marginal and conditional posterior in the cut posterior in \eqref{cutpost}, provided these are contained in $\mcF_\varphi$ and $\mcF_{\eta \mid \varphi}$. We refer to densities in $\mcF_\varphi$ and $\mcF_{\eta\mid \varphi}$ as variational densities and their parameters as variational parameters. Lemma \ref{lem:cutpost} shows that our definition of an OCCP in \eqref{cut_varphi} is equivalent to an optimization-centric joint posterior in \cite{Knoblauch2022}.

\begin{lemma}\label{lem:cutpost} 
Consider a joint generalized posterior density for $\theta$ optimized over the class $\mcF_{\theta}=\{q(\theta) \mid q(\varphi) \in \mcF_{\varphi}, q(\eta\mid \varphi) \in \mcF_{\eta\mid \varphi} \}$, based on the divergence
\[
\mathcal{D}_{\theta} \{q(\theta) \| p(\theta)\} 
= \mathcal{D}_\varphi\{ q(\varphi) \| p(\varphi) \} 
+ \E_{q(\varphi)} [ \mathcal{D}_{\eta\mid \varphi} \{ q(\eta \mid \varphi) \| p(\eta \mid \varphi)\} ],
\]
where $\mathcal{D}_\varphi$ and $\mathcal{D}_{\eta\mid \varphi}$ are the divergences in \eqref{cut_varphi}. Let $\lambda =1$ and consider the loss
\begin{align*} 
R(y\mid \theta) &=\lambda_1 L(z\mid\varphi)+\lambda_2 M(w\mid \theta) + N(w\mid \varphi), \quad \text{where} \\
N(w\mid \varphi) &= -\min_{q(\eta\mid\varphi)\in \mcF_{\eta\mid\varphi}} \left[ \lambda_2 \E_{q(\eta \mid \varphi)}
M(w \mid \theta) +\mathcal{D}_{\eta\mid\varphi}\{q(\eta\mid\varphi) \| p(\eta\mid\varphi)\} \right].
\end{align*}
Then the optimization-centric posterior density in \eqref{ocpost1} is the OCCP defined in \eqref{cut_varphi}.
\end{lemma}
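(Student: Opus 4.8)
The plan is to substitute the specified loss $R(y\mid\theta)$, the divergence $\mathcal{D}_\theta$, and $\lambda=1$ into the joint objective of \eqref{ocpost1}, and then show that the single joint minimization over $\mcF_\theta$ decouples exactly into the two-stage problem \eqref{cut_varphi}. Writing the joint objective as $J(q)=\E_{q(\theta)}\{R(y\mid\theta)\}+\mathcal{D}_\theta\{q(\theta)\|p(\theta)\}$ and factorizing $q(\theta)=q(\varphi)\,q(\eta\mid\varphi)$, I would first apply linearity of expectation together with the observation that both $L(z\mid\varphi)$ and $N(w\mid\varphi)$ depend on $\theta$ only through $\varphi$. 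This rewrites $J(q)$ as the sum of the first-stage objective $\lambda_1\E_{q(\varphi)}\{L(z\mid\varphi)\}+\mathcal{D}_\varphi\{q(\varphi)\|p(\varphi)\}$, plus $\E_{q(\varphi)}\{N(w\mid\varphi)\}$, plus the inner term $\E_{q(\varphi)}\bigl[\lambda_2\E_{q(\eta\mid\varphi)}\{M(w\mid\theta)\}+\mathcal{D}_{\eta\mid\varphi}\{q(\eta\mid\varphi)\|p(\eta\mid\varphi)\}\bigr]$.

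The crux is the definition of $N(w\mid\varphi)$ itself. By construction, for each fixed $\varphi$ the bracketed inner quantity $\lambda_2\E_{q(\eta\mid\varphi)}\{M(w\mid\theta)\}+\mathcal{D}_{\eta\mid\varphi}\{q(\eta\mid\varphi)\|p(\eta\mid\varphi)\}$ is bounded below by $-N(w\mid\varphi)$, with equality attained precisely at the second-stage minimizer $p^*(\eta\mid\varphi,w)$ of \eqref{cut_varphi}. Because $\mcF_{\eta\mid\varphi}$ is defined so that the conditional density may be selected independently for each value of $\varphi$, minimizing the $q(\varphi)$-expectation of this per-$\varphi$ objective is equivalent to minimizing it pointwise. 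Hence the sum $\E_{q(\varphi)}\{N(w\mid\varphi)\}+\E_{q(\varphi)}[\,\cdot\,]\ge 0$, with equality iff $q(\eta\mid\varphi)=p^*(\eta\mid\varphi,w)$ for $q(\varphi)$-almost every $\varphi$. This is exactly where $N$ was engineered so that the two $N$-terms cancel at the optimum.

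Substituting the optimal conditional back, $J(q)$ collapses to $\lambda_1\E_{q(\varphi)}\{L(z\mid\varphi)\}+\mathcal{D}_\varphi\{q(\varphi)\|p(\varphi)\}=\mL_1$, whose minimizer over $\mcF_\varphi$ is $p_\cut^*(\varphi\mid z)$ by the first line of \eqref{cut_varphi}. Crucially, neither the inner minimizer $p^*(\eta\mid\varphi,w)$ nor $N(w\mid\varphi)$ depends on $q(\varphi)$, so the two stages genuinely decouple and the joint minimizer factorizes as $p_\cut^*(\varphi\mid z)\,p^*(\eta\mid\varphi,w)$, which is the OCCP. I would close by noting that the joint divergence $\mathcal{D}_\theta$ prescribed in the lemma is precisely the one induced by the stagewise penalties, so no mismatch arises between the divergence used jointly and those used in \eqref{cut_varphi}.

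The step I expect to be the main obstacle is justifying the interchange of the minimization with the outer expectation $\E_{q(\varphi)}$ in the second paragraph, that is, arguing rigorously that the joint infimum over $(q(\varphi),q(\eta\mid\varphi))$ equals the profiled two-stage infimum. This requires $\mcF_{\eta\mid\varphi}$ to admit a measurable, per-$\varphi$ selection of the conditional density with $p^*(\eta\mid\varphi,w)$ lying in the class for almost every $\varphi$; under that structural assumption the pointwise and integrated minimizations coincide and the argument goes through.
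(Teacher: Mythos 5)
Your proposal is correct and follows essentially the same route as the paper's proof: substitute the engineered loss into the joint objective, observe that the second-stage terms plus $N(w\mid\varphi)$ vanish pointwise in $\varphi$ at the conditional minimizer $p^*(\eta\mid\varphi,w)$, and conclude that the remaining objective is exactly the first-stage problem minimized by $p_\cut^*(\varphi\mid z)$. Your closing remark on the measurable per-$\varphi$ selection needed to interchange minimization with $\E_{q(\varphi)}$ is a fair technical caveat that the paper's proof passes over silently, but it is a refinement of, not a departure from, the same argument.
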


\begin{proof}
For the given loss, objective function of the optimization in \eqref{ocpost1} is 
\begin{equation*} 
\begin{aligned}
&\E_{q(\varphi)} \{\lambda_1 L(z\mid \varphi)\} + 
	\mathcal{D}_\varphi\{ q(\varphi) \| p(\varphi) \}  \\
& + \E_{q(\varphi)} \left[ \lambda_2 \E_{q(\eta\mid \varphi)} \{ M(w\mid \varphi, \eta)\} + \mathcal{D}_{\eta\mid \varphi} \{ q(\eta \mid  \varphi) \| p(\eta\mid \varphi)\}  + N(w\mid\varphi) \right] .
\end{aligned}
\end{equation*}
For a given $\varphi$, the expression in $[\cdot]$ in the second line is minimized by $q(\eta\mid\varphi)=p^*(\eta\mid \varphi,w)$ $\forall \, \varphi$. The minimum value attained is 0, due to the definition of $N(w\mid\varphi)$. The second line is 0 after minimization with respect to $q(\eta\mid \varphi)$, regardless of $q(\varphi)$. Hence the optimal $q(\varphi)$ minimizes the first line, leading to $q(\varphi)=p^*_\cut(\varphi\mid z)$. 
\end{proof}

\begin{rem}
\upshape The sequential definition of the OCCP in \eqref{cut_varphi} is more intuitive as it shows how the second module's influence is removed when inferring the first module's parameters. This is less obvious in Lemma \ref{lem:cutpost}, but viewing the OCCP as an optimization-centric posterior allows us to use associated theory  and methodology to analyze the OCCP, and provides an elegant unification of both approaches.
\end{rem}

\begin{rem}
\upshape When $\mathcal{D}_{\eta\mid \varphi}$ is KLD and $\mathcal{F}$ is unrestricted, $p^*(\eta \mid \varphi,w)$ is the Gibbs posterior in \eqref{mfn}. Taking log followed by the expectation of both sides of \eqref{mfn} with respect to $q(\eta \mid \varphi)$, we can see that for $q(\eta\mid\varphi)=p^*(\eta\mid\varphi, w)$, 
\[
\lambda_2 \E_{q(\eta \mid \varphi)}\{M(w\mid \theta)\} 
+ \mathcal{D}_{\eta\mid \varphi} \{ q(\eta \mid  \varphi) \| p(\eta\mid \varphi)\}=-\log m_\eta(w\mid\varphi),
\]
which shows that $N(w\mid\varphi)=\log m_\eta(w \mid \varphi)$. Thus, the choice of loss in Lemma \ref{lem:cutpost} is consistent with that in \cite{Frazier2024}, and  \cite{Carmona2020} in the case of cutting feedback for parametric models, where $m_\eta(w \mid \varphi)=p(w \mid \varphi)$.
\end{rem}

From characterization of the OCCP in \eqref{cut_varphi} and Lemma \ref{lem:cutpost}, $p_\cut^*(\theta \mid y)$ can be found via a two-stage procedure. Let $\mL_1 = \lambda_1 \mL_{1, \ls} + \mL_{1, \dv}$ where $\mL_{1, \ls} = \E_{q(\varphi)} \{ L(z \mid \varphi)\}$ and $\mL_{1, \dv} = \mathcal{D}_\varphi\{ q(\varphi) \| p(\varphi) \}$, and $\mL_2 = \lambda_2 \mL_{2, \ls} + \mL_{2, \dv}$ where $\mL_{2, \ls} =  \E_{q(\theta)} \{M(w\mid \theta) \}$ and $\mL_{2, \dv} = \E_{q(\varphi)}[\mathcal{D}_{\eta\mid \varphi} \{ q(\eta \mid  \varphi) \| p(\eta\mid \varphi) \}]$. First, we obtain $p_\cut^*(\varphi \mid z)$ by minimizing $\mL_1$ with respect to $q(\varphi) \in \mcF_\varphi$. Second, we minimize $\mL_2$ with respect to $q(\eta\mid \varphi) \in \mcF_{\eta\mid \varphi}$ by plugging in $q(\varphi) = p_\cut^*(\varphi \mid z)$.

\subsection{R\'{e}nyi divergence and choice of learning rate}
Introducing learning rates in the OCCP provides flexibility in adjusting scale of the expected loss relative to the divergence, but prior robustness cannot be achieved simply by adjusting learning rates. This is inappropriate and may lead to poor uncertainty quantification, as
it is not just magnitude of the divergence that matters, but how the divergence measure varies across the space of candidate densities, and its relative weighting of different densities. \cite{Knoblauch2022} further discuss robust divergences in optimization-centric generalized Bayesian inference.

We consider the use of R\'{e}nyi divergence in the OCCP, due to its robustness and ability to moderate prior influence. R\'{e}nyi divergence is defined as 
\begin{equation*}
\mathcal{D}_\alpha(q \| p) = 
\frac{1}{\alpha-1} \log \left( \int q(\theta)^\alpha p(\theta)^{1-\alpha} \,\dt\theta \right)
\quad \text{for} \quad \alpha \in \mathbb{R}^+ \setminus \{1\}.
\end{equation*}
As R\'{e}nyi divergence converges to KLD when $\alpha\rightarrow 1$, we define $\mathcal{D}_1(q\|p) = \KL(q\|p) = \int q(\theta) \log\{ q(\theta)/p(\theta)\} \,\dt\theta$. Lemma \ref{lemS1} in Appendix \ref{sec: renyi results} provides useful results for evaluating the R\'{e}nyi divergence between common variational densities and priors. In conventional variational inference, \cite{Li2016} show that variational densities transit from mode-seeking to mass-covering  behavior as $\alpha$ decreases to 0. Moreover, R\'{e}nyi divergence is increasing on $\alpha > 0$, with $\mathcal{D}_\alpha (q \| p) \leq \mathcal{D}_\beta(q\| p)$ if $0 < \alpha < \beta$ \citep{Erven2014}. Thus, as $\alpha$ falls to 0 and the divergence diminishes, the OCCP will favor densities that minimize the expected loss primarily. To counteract this phenomenon, we use learning rates to scale the divergence term.

Learning rate of a KLD regularizer can be chosen by matching information \citep{holmes+w17, Lyddon2019}, achieving a target coverage probability \citep{syring+m18}, or treating learning rates as parameters of an unnormalized model \citep{jewson+r22,yonekura+s23}. \cite{wu+m23} summarize several existing techniques, while \cite{Frazier2024} apply them to a two-module generalized cut posterior with two learning rates. However, we are unaware of any studies addressing learning rate selection for other divergences, and we propose a two-step strategy for setting learning rates with R\'{e}nyi divergence regularizers. First, an appropriate learning rate is selected for KLD. Second, we calibrate the divergence penalty across different values of $\alpha$ to match this baseline for a reference density. We do not advocate any choice of learning rate for KLD, as different approaches are suited to different problems. Specifically, let $p_{\cut, \alpha}^*(\theta \mid y) = p_{\cut, \alpha}^* (\varphi \mid z) p_{\alpha}^*(\eta \mid \varphi, w)$ be the OCCP based on R\'{e}nyi's $\alpha$-divergence. Writing $\lambda_1^\alpha$ and $\lambda_2^\alpha$ as learning rates for R\'{e}nyi's $\alpha$-divergence, we set  
\begin{equation*} 
\lambda_1^\alpha = \frac{\mathcal{D}_\alpha \{p^*_{\cut, 1}(\varphi \mid z) \| p(\varphi)\}}{\mathcal{D}_1 \{p^*_{\cut, 1}(\varphi\mid z) \| p(\varphi)\}}\lambda_1^1, 
\;\;
\lambda_2^\alpha = \frac{\E_{p^*_{\cut, 1}(\varphi \mid z)} [\mathcal{D}_\alpha \{p^*_{1}(\eta \mid \varphi, w) \| p(\eta \mid \varphi)\}]}{\E_{p^*_{\cut,1}(\varphi \mid z)} [\mathcal{D}_1 \{p^*_{1}(\eta \mid \varphi, w) \| p(\eta \mid \varphi)\}]}\lambda_2^1.
\end{equation*}
This makes the prior penalty for KLD and R\'{e}nyi's $\alpha$-divergence the same for any $\alpha$, when evaluated at the OCCP for KLD. If the loss terms are identical and the prior regularization term is scaled correctly relative to the loss for the Gibbs posterior, matching penalties will put the $\alpha$-divergence prior regularization on a proper scale.

Our use of R\'{e}nyi divergence in the OCCP differs from its use in variational inference \citep{Li2016}, which considers in our context the optimization: $ \argmin_{q\in\mathcal{F}_{\theta}} \mathcal{D}_\alpha \{ q(\theta) \| p^*_\cut(\theta \mid y) \}$. The solution to this problem will not coincide with that in (2) unless $\alpha\rightarrow 1$. Hence, for $\alpha\ne1$, our use of R\'{e}nyi divergence does not encompass existing R\'{e}nyi divergence based variational inference, and posterior concentration results in Section \ref{sec posterior concentration} are not related to results on R\'{e}nyi divergence based variational posterior concentration \citep{zhang2020convergence, jaiswal2020asymptotic}.

\section{Examples} \label{sec: examples}
This section discusses general examples where the OCCP can produce inference that is more robust to model misspecification than standard Bayesian inference.

\begin{example}[Nonparametric quasi-likelihoods]
\upshape Inference in generalized linear models (GLMs) can be performed in the OCCP framework using quasi-likelihood based loss functions. Consider a GLM based on the first and second order relations between a response variable $Y$ and covariates $X$ for some known link function $g(\cdot)$, an unknown function of the covariates $\varphi(\cdot)$ and a variance function $V_\eta(\cdot)$ known up to $\eta$. Let $\E(Y \mid X=x) = g^{-1}\{ \varphi(x) \}$ and $\Var(Y \mid X=x) = V_\eta [g^{-1}\{ \varphi(x) \}]$. \cite{linero2024bayesian} proposes to conduct Bayesian inference on $\varphi$, $\eta$ using the quasi-likelihood based generalized posterior, $p(\eta,\varphi\mid y,x)\propto p(\varphi) p(\eta\mid\varphi) \exp\left[-(\lambda/2)\sum_{i=1}^{n}D\{y_i,\varphi(x_i),\eta\}\right]$, where $D \{y,\varphi(x),\eta\} = 2\int_{\varphi}^y \; (y-t) / V_\eta \{g^{-1}(t)\} \; \dt t$. As joint inference on $\varphi$, $\eta$ can be sensitive to choice of $V_\eta(\cdot)$ and the priors, sequential inference on $\varphi$, $\eta$ can be considered as an alternative using the loss functions, $L(y,x\mid \varphi) = 2\int_{\varphi}^{y} \;(y-t) / V^\star \{g^{-1}(t)\} \; \dt t$ for some fixed and known $V^\star(\cdot)$ and $M(y,x\mid \eta,\varphi) = D\{y,\varphi(x),\eta\}$. This example implies that the OCCP fits within the sequential Gibbs posterior framework of \cite{winter2023sequential}, which allows one to conduct Bayesian inference in a sequential manner without the need for nested MCMC sampling techniques.
\end{example}

\begin{example}[Copula modeling]
\upshape Copulas are multivariate models that can capture general dependence structures in a random vector $y\in\mathbb{R}^m$, and are specified through two modules: marginal model densities, $f(y_j\mid\varphi_j)$, $j=1,\dots,m$, and a copula density $c\{F_1^{-1}(y_1 \mid \varphi_1),\dots,F_m^{-1}(y_m\mid\varphi_m)\mid\eta\}$, where $F_j^{-1}(y_j\mid\varphi_j)$ denotes the inverse cumulative distribution function (cdf) of the $j$th marginal model. The parameters $\varphi_j$ govern the behavior of the $j$th marginal model, while $\eta$ encodes the dependence structure of the assumed copula density $c(\cdot\mid \eta)$. This model falls in the OCCP framework by setting $z=(y_1,\dots,y_m)^\top$,  $L(z\mid\varphi) = \prod_{j=1}^m f(y_j\mid\varphi_j)$ and $M(w \mid \theta) = c\{F_1^{-1}(y_1 \mid \varphi_1),\dots, F_m^{-1}(y_m \mid \varphi_m) \mid \eta\}$, with $w=(F_1^{-1}(y_{1}),\dots,F_m^{-1}(y_{m}))^\top$.  
\end{example}

\subsection{Biased normal means} \label{sec biased mean}
To further illustrate the advantages of the OCCP, we return to the biased means example \citep{Bayarri2009} in Section \ref{sec-intro}. Let $z_i \mid \varphi \sim \N(\varphi, 1)$ for $i=1, \dots, n_1$ and $w_j  \mid \theta \sim \N(\varphi+\eta, 1)$ for $j=1, \dots, n_2$ independently, with $\bar{z} = n_1^{-1}\sum_{i=1}^{n_1} z_i$ and $\bar{w} = n_2^{-1}\sum_{j=1}^{n_2} w_j$. While $z$ is a reliable data source, $w$ is possibly biased with unknown bias $\eta$. Independent priors, $\varphi \sim \N(\mu_0, v_0)$ and $\eta \sim \N(0, v_b)$ are assigned.

\cite{Bayarri2009} raised two issues with basing inference on $p(\varphi \mid y)$, assuming $v_0 \rightarrow \infty$. First, comparing $\Var(\varphi \mid y) = \{n_1 + 1/(v_b + 1/n_2)\}^{-1}$ and $\Var(\varphi \mid z) = n_1^{-1}$, taking into account the biased data on top of the reliable data does not reduce the marginal posterior variance of $\varphi$ drastically, even if $n_2$ is large, unless $v_b$ is small. Second, if the prior for $\eta$ is misspecified, it can be detrimental to base inference on $\E(\varphi \mid y) = \{n_1 \bar{z} + n_2 \bar{w} / (n_2 v_b+1) \} \{n_1 + n_2 / (n_2 v_b+1) \}^{-1}$ instead of $\E(\varphi \mid z) = \bar{z}$. If the bias is large but $v_b$ is small, too much weight will be allocated to $\bar{w}$, causing the overall posterior mean to deviate far from $\bar{z}$, which is more reliable.

Using cutting feedback to prevent information flow from $w$ to $\varphi$, the marginal posterior of $\varphi$ based only on reliable data $z$ is $\varphi \mid z \sim \N \left( s_1 \bar{z} + (1-s_1) \mu_0, s_1/n_1\right)$, where $s_1 = n_1 v_0/(n_1 v_0 + 1)$ and $s_1\rightarrow 1$ as $v_0\rightarrow\infty$. In addition, $\eta \mid \varphi, w \sim \N\left(s_2(\bar{w} - \varphi), s_2/n_2\right)$, where $s_2 = n_2 v_b/(n_2 v_b + 1)$. To apply the OCCP framework, suppose the loss functions are the negative log-likelihoods and $\lambda_1^1 = \lambda_2^1 = 1$ for KLD. We assume $q(\varphi) = \N(\mu_\varphi, v_\varphi)$ and $q(\eta \mid \varphi) = \N(\mu_\eta + v_{\varphi, \eta} (\varphi - \mu_\varphi) / v_\varphi, v_\eta - v_{\varphi, \eta}^2 / v_\varphi)$, which correspond to a bivariate normal distribution where $\varphi$ and $\eta$ have marginal means $\mu_\varphi$ and $\mu_\eta$, marginal variances $v_\varphi$ and $v_\eta$, and covariance $v_{\varphi,\eta}$. When $\alpha=1$, the OCCP is just the cut joint posterior as R\'{e}nyi divergence reduces to KLD. If $\alpha \neq 1$, optimal values of $\{\mu_\varphi, \mu_\eta, v_\phi, v_\eta, v_{\varphi, \eta}\}$ are found in two stages. First, we find $\{\mu_\varphi, v_\varphi\}$ that minimize $\mL_1 = \lambda_1^\alpha \mL_{1, \ls} + \mL_{1, \dv}$. Second, we find $\{\mu_\eta, v_\eta, v_{\eta, \varphi}\}$ that minimize $\mL_2 = \lambda_2^\alpha\mL_{2, \ls} +  \mL_{2, \dv}$ by fixing $\{\mu_\varphi, v_\varphi\}$ at their optimal values in stage 1.

We simulate 1000 data sets from the model, setting $n_1=20$, $n_2=1000$, $\varphi = \eta =5$ and $v_b=1$. There is more biased than reliable data and the prior variance for $\eta$ is too small. For $\varphi$, we specify an objective prior $\N(0, 100)$ or a strong subjective prior $\N(10, 1)$ that erroneously pulls the posterior mean away from the true value 5. We compute the true posterior, cut posterior and OCCP for $0.05 \leq \alpha \leq 5$ using the BFGS algorithm in {\tt R}. Henceforth, $\alpha=0.999$ represents results under KLD approximately. For each data set, we compute the bias and root mean squared error (RMSE) of the posterior means relative to the true values. We also find the 95\% credible intervals and coverage probabilities of containing the true values. These measures and learning rates averaged over all data sets are given in Table \ref{Tab1}.
\begin{table}[tb!]
\centering
\resizebox{\columnwidth}{!}{\begin{tabular}{lc|rrrrr|rrrrr}
\hline
&& \multicolumn{5}{c|}{Objective prior $\varphi \sim \N(0, 100)$} & \multicolumn{5}{c}{Subjective prior $\varphi \sim \N(10, 1)$} \\
&$\alpha$ & True & 0.05 & 0.5 & 0.999 & 5 
& True & 0.05 & 0.5 & 0.999 & 5 \\ \hline
\multirow{2}{*}{Bias} & $\varphi$ & 0.2354 & -0.0036 & -0.0027 & -0.0025 & \bf {-0.0023} & 0.4543 & {\bf 0.0711} & 0.2095 & 0.2381 & 0.2298 \\ 
&$\eta$ & -0.2392 & {\bf -0.0007} & -0.0015 & -0.0015 & -0.0016 
& -0.4579 & {\bf -0.0754} & -0.2134 & -0.2419 & -0.2334 \\ 
\hline
\multirow{2}{*}{RMSE} & $\varphi$ & 0.3198 & {\bf 0.2272} & 0.2273 & 0.2273 & 0.2273
& 0.4992 & {\bf 0.2382} & 0.3060 & 0.3219 & 0.3159  \\ 
& $\eta$ & 0.3227 & {\bf 0.2273} & 0.2274 & 0.2274 & 0.2274 
& 0.5025 & {\bf 0.2396} & 0.3088 & 0.3248 & 0.3187 \\ 
\hline
\multirow{2}{*}{Coverage} & $\varphi$ & 0.808 & {\bf 0.978} & 0.949 & 0.942 & 0.935 
& 0.429 & {\bf 1.000} & 0.995 & 0.807 & 0.642 \\ 
& $\eta$ & 0.821 & {\bf 0.977} & 0.955 & 0.949 & 0.938 
& 0.422 & {\bf 1.000} & 0.994 & 0.819 & 0.638 \\ 
\hline
\multirow{2}{*}{\begin{minipage}{1.2cm}
Learning \\ rate
\end{minipage}} & $\varphi$ && 0.69 & 0.94 & 1.00 & 1.09 
&& 0.50 & 0.94 & 1.00 & 1.06 \\ 
& $\eta$ && 0.92 & 0.99 & 1.00 & 1.02 
&& 0.91 & 0.99 & 1.00 & 1.02 \\ \hline
\end{tabular}}
\caption{Results of true posterior and OCCPs in simulation. Best values are in bold. \label{Tab1}}
\end{table}

The bias and RMSE of the OCCPs are smaller than that of the true posterior for both priors, showing the benefits of cutting feedback. For the subjective prior, R\'{e}nyi divergence with a small $\alpha$ is very robust and effective in reducing the bias and RMSE. Coverage probabilities of the OCCPs are uniformly higher than that of the true posterior, and increase steadily as $\alpha$ decreases. As $\alpha$ increases, R\'{e}nyi divergence increases and the learning rates increase in tandem to scale the expected loss appropriately. Figure \ref{Fig1} shows the contour plots for a simulated data set under the subjective prior. The OCCPs can capture the true values better than the true posterior and R\'{e}nyi divergence with a smaller $\alpha$ yields better performance. Variance of the OCCP is largest at $\alpha=0.05$, and decreases as $\alpha$ increases, which explains the higher coverage probabilities for smaller $\alpha$ in Table \ref{Tab1}.

\section{Concentration of OCCP} \label{sec posterior concentration}
This section extends theoretical results on cutting feedback methods to OCCPs and allows for more general parameter spaces. We work with OCCPs defined according to the normalized losses, $\Ln_{n_1}(\varphi) =\frac{1}{n_1}L(z\mid\varphi)$ and $\Mn_{n_2}(\theta) = \frac{1}{n_2}M(w\mid\theta)$. Define the functions  $\Ln_{n_1}(\varphi,\varphi')=\Ln_{n_1}(\varphi)-\Ln_{n_1}(\varphi')$ and $\Mn_{n_2}(\theta,\theta')=\Mn_{n_2}(\theta)-\Mn_{n_2}(\theta')$. Let $C$ be a non-random positive constant that can change from line-to-line, and for non-stochastic sequences $a_n$, $b_n$, let $a_n\lesssim b_n$ mean that $a_n \le C b_n$ for all $n$ large enough, and $a_n\asymp b_n$ if $a_n\lesssim b_n$ and $b_n\lesssim a_n$.  Let $\E $ denote expectation with respect to the true data generating process (DGP), $(w_i,z_i) \stackrel {iid}{\sim}P_0$. With generic expectations, given a function $f(X)$, let $\E_{G}[f(X)]$ denote the expectation of $f(X)$ when $X\sim G$. For $\theta,\theta'\in\Theta$, let $\dt(\cdot,\cdot)$ denote some distance operating on the space of $\Theta$.

\subsection{Useful Bounds for OCCPs}
Our first results are formulated as provably approximately correct (PAC) expectation inequalities for the expected loss associated with the OCCP, often referred to as PAC-Bayes inequalities (see e.g., \citealp{Alquier2024}). These results rely on the following two regularity conditions.

\begin{assumption}\label{ass:loss_opt}
(i) The functions $\varphi\mapsto \Ln(\varphi):=\E\Ln_{n_1}(\varphi)$ and $\theta\mapsto \Mn(\theta):=\E\Mn_{n_2}(\theta)$ exist; (ii) For any $\delta>0$, there exist $\varphi^\star\in\Phi$ and $\eta^\star\in\mathcal{E}$ such that $\inf_{\varphi: \; \dt(\varphi,\varphi^\star) > \delta}\Ln(\varphi)>\Ln(\varphi^\star)$ and $\inf_{\eta:\; \dt(\eta,\eta^\star)>\delta}\Mn(\eta,\varphi^\star)>\Mn(\eta^\star,\varphi^\star)$. 
\end{assumption}

Define the functions $\Ln(\varphi,\varphi') = \Ln(\varphi)-\Ln(\varphi')$, and $\Mn(\theta,\theta')=\Mn(\theta)-\Mn(\theta')$. 

\begin{assumption}\label{ass:loss_moments}
For any $\lambda_1,\lambda_2>0$, there exists constants $C_1$ and $C_2$ such that \\
(i) $\int \E \left[e^{{\lambda_1} \{ \Ln(\varphi,\varphi^\star)-\Ln_{n_1} (\varphi,\varphi^\star)\}} \right]p(\varphi)\dt\varphi\le e^{\lambda_1^2C_1^2/{{n_1}}}$; \\
(ii) $\E \int e^{\lambda_2 \{\Mn(\theta,\theta^\star)-\Mn_{n_2}(\theta,\theta^\star)\}} p(\eta\mid\varphi) p_\cut(\varphi\mid z) \, \dt\eta \, \dt\varphi\le e^{\lambda_2^2C_2^2/n_2}$.
\end{assumption}

\begin{rem}
\upshape Assumption \ref{ass:loss_moments}(i) is a standard moment condition in PAC-Bayes analysis \citep[e.g. Thm 2.4,][]{alquier2020concentration}. Whether it holds depends on the choice of loss function and prior, and properties of the true DGP. {Assumption \ref{ass:loss_moments}(ii) requires moments with respect to the modified prior $p(\eta\mid\varphi)p_\cut(\varphi\mid z)$ instead of the prior as in standard PAC-Bayes analysis, because inference in OCCPs is conditional on $\varphi$. Hence, the assumptions must account for the conditional relationship of $\eta\mid\varphi$. } 
\end{rem}

\begin{lemma}\label{lem:PAC-Bayes1}
Under Assumptions \ref{ass:loss_opt}(i) and \ref{ass:loss_moments}(i), the following statements hold:
\begin{flalign}\label{eq:varphi_PAC}
&\E\left[\int \Ln(\varphi,\varphi^\star)\dt p_\cut(\varphi\mid z)\right]
\le\frac{\lambda_1C_1^2}{n_1}\\
&+\E \inf_{q\in\mathcal{F}_\varphi}
\begin{cases}
\int \Ln_{n_1}(\varphi,\varphi^\star)\dt q(\varphi)+\mathcal{D}_\alpha(q\| p)/\lambda_1, &\alpha \geq 1,\;\lambda_1>0;	\\
\int \Ln_{n_1}(\varphi,\varphi^\star)\dt q(\varphi)+\wh\lambda_1{\mathcal{D}_\alpha(q\|p)}{}, &0<\alpha<1,\;\wh\lambda_1\ge\lambda_1^{-1}\frac{\KL(p^*_\cut\|p)}{\mathcal{D}_{\alpha}(p^*_\cut\|p)}.
\end{cases}\nonumber
\end{flalign}	
\end{lemma}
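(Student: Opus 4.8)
The plan is to prove a pathwise bound (conditional on the data $z$) on $\int\Ln(\varphi,\varphi^\star)\,\dt p_\cut(\varphi\mid z)$ and then take expectations. First I would split the population excess loss as $\int\Ln(\varphi,\varphi^\star)\,\dt p_\cut=\int\Ln_{n_1}(\varphi,\varphi^\star)\,\dt p_\cut+\int\{\Ln(\varphi,\varphi^\star)-\Ln_{n_1}(\varphi,\varphi^\star)\}\,\dt p_\cut$, an empirical term plus a generalization gap. The empirical term is pinned down by the defining optimality of the OCCP: since $p_\cut$ minimizes $\lambda_1\int\Ln_{n_1}(\varphi,\varphi^\star)\,\dt q+\mathcal{D}_\alpha(q\|p)$ over $q\in\mcF_\varphi$ (Assumption \ref{ass:loss_opt}(i) ensures the needed expectations exist), dividing by $\lambda_1$ yields the exact identity $\int\Ln_{n_1}(\varphi,\varphi^\star)\,\dt p_\cut+\lambda_1^{-1}\mathcal{D}_\alpha(p_\cut\|p)=\inf_{q\in\mcF_\varphi}\{\int\Ln_{n_1}(\varphi,\varphi^\star)\,\dt q+\lambda_1^{-1}\mathcal{D}_\alpha(q\|p)\}$.

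For the generalization gap I would use the Donsker--Varadhan (Gibbs) variational representation of the KLD: for any $t>0$, $\int\{\Ln-\Ln_{n_1}\}\,\dt p_\cut\le t^{-1}\KL(p_\cut\|p)+t^{-1}\log\int e^{t\{\Ln-\Ln_{n_1}\}}\,\dt p$. Choosing $t=\lambda_1$ and adding this to the optimality identity, the only data-dependent residual that does not already match the infimum is $\lambda_1^{-1}\{\KL(p_\cut\|p)-\mathcal{D}_\alpha(p_\cut\|p)\}$, i.e.\ the mismatch between the KLD produced by the change of measure and the R\'enyi divergence produced by optimality. Monotonicity of R\'enyi divergence in its order \citep{Erven2014} controls this residual and is exactly what forces the two regimes apart.

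For $\alpha\ge1$ we have $\KL(p_\cut\|p)\le\mathcal{D}_\alpha(p_\cut\|p)$, so the residual is nonpositive; dropping it leaves a bound valid for every $q$, and I can pass to the infimum to get the first branch with coefficient $\lambda_1^{-1}$ on $\mathcal{D}_\alpha(q\|p)$. For $0<\alpha<1$ the residual is nonnegative, so instead of discarding it I keep the KLD term at $p_\cut$ and rewrite $\lambda_1^{-1}\KL(p_\cut\|p)=\wh\lambda_1\,\mathcal{D}_\alpha(p_\cut\|p)$; because $\KL\ge\mathcal{D}_\alpha$ here, the multiplier obeys $\wh\lambda_1\ge\lambda_1^{-1}\KL(\wh q\|p)/\mathcal{D}_\alpha(\wh q\|p)$ with $\wh q=p_\cut$, which is precisely the admissibility condition stated in the second branch (any larger $\wh\lambda_1$ only loosens the bound). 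In both cases I then take $\E$ over the data; the infimum passes through as $\E\inf_q\{\cdots\}$ because the bound is pathwise, while the remaining $\lambda_1^{-1}\E\log\int e^{\lambda_1\{\Ln-\Ln_{n_1}\}}\,\dt p$ is handled by Jensen's inequality ($\E\log\le\log\E$), Fubini, and the exponential-moment bound of Assumption \ref{ass:loss_moments}(i), which caps $\E\int e^{\lambda_1\{\Ln-\Ln_{n_1}\}}\,\dt p$ by $e^{\lambda_1^2C_1^2/n_1}$ and delivers the leading term.

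I expect the decisive difficulty to be the $0<\alpha<1$ regime. The change-of-measure step is intrinsically tied to the KLD, whereas the regularizer is a R\'enyi divergence, so no clean cancellation is available and one must re-express the KLD penalty through the inflated multiplier $\wh\lambda_1$; this data-dependent rescaling has no analogue in standard KLD-based PAC-Bayes arguments and is the essential novelty of the result. A secondary point needing care is that Assumption \ref{ass:loss_moments}(i) states the moment bound under the prior $p(\varphi)$, so the change of measure must be anchored at the prior, which is consistent with using the Donsker--Varadhan representation against $p$ rather than a $\varphi^\star$-centered tail argument.
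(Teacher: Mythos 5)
Your proposal follows essentially the same route as the paper's own proof: a Donsker--Varadhan change of measure anchored at the prior, the exponential-moment condition of Assumption \ref{ass:loss_moments}(i) combined with Jensen's inequality and Fubini to produce the leading term, monotonicity of $\alpha\mapsto\mathcal{D}_\alpha$ to separate the two regimes, and the defining optimality of the OCCP to produce the infimum. Your reorganization---working pathwise, invoking the optimality identity before comparing divergences, and taking expectations last---is immaterial, and your $\alpha\ge 1$ branch is correct and equivalent to the paper's Case 1. (Both your argument and the paper's actually deliver the leading term $\lambda_1 C_1^2/n_1$ rather than the $\lambda_1^2C_1^2/n_1$ displayed in the statement; that discrepancy is in the paper itself.)

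The $0<\alpha<1$ branch is where both you and the paper leave a gap, and your phrase ``the infimum passes through as $\E\inf_q\{\cdots\}$ because the bound is pathwise'' is doing unsupported work. After your rewrite, the pathwise bound reads
\begin{equation*}
\int \Ln(\varphi,\varphi^\star)\,\dt p_\cut \;\le\; \int\Ln_{n_1}(\varphi,\varphi^\star)\,\dt p_\cut + \wh\lambda_1\,\mathcal{D}_\alpha(p_\cut\| p) + \frac{1}{\lambda_1}\log\int e^{\lambda_1\{\Ln(\varphi,\varphi^\star)-\Ln_{n_1}(\varphi,\varphi^\star)\}}\,\dt p(\varphi),
\end{equation*}
whose first two terms on the right are the $\wh\lambda_1$-weighted objective \emph{evaluated at} $q=p_\cut$. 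Replacing this evaluation by $\inf_{q\in\mcF_\varphi}\{\int\Ln_{n_1}(\varphi,\varphi^\star)\,\dt q+\wh\lambda_1\mathcal{D}_\alpha(q\|p)\}$ is an inequality in the wrong direction: since $p_\cut\in\mcF_\varphi$, the infimum is $\le$ the value at $p_\cut$, so the substitution shrinks the right-hand side rather than enlarging it. Equality would require $p_\cut$ to minimize the $\wh\lambda_1$-weighted objective, but $p_\cut$ minimizes the objective with weight $1/\lambda_1$, and $\wh\lambda_1 > 1/\lambda_1$ whenever $\KL(p_\cut\|p)>\mathcal{D}_\alpha(p_\cut\|p)$, so the two minimizers differ in general. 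The paper's own Case 2 makes exactly this move, asserting the final ``$=\E\inf_{q\in\mathcal{F}_\varphi}\{\cdots\}$'' by appeal to the definition of $\wh q$ in \eqref{cut_varphi}; you have therefore reproduced the published argument, gap included. A self-contained proof of the second branch needs either an additional argument or a restated bound---for instance, leaving the bound in terms of the value at $p_\cut$, or stating it as the infimum with weight $1/\lambda_1$ plus the additive slack $\lambda_1^{-1}\{\KL(p_\cut\|p)-\mathcal{D}_\alpha(p_\cut\|p)\}$, which is what your decomposition genuinely establishes and which is also the form that the proof of Corollary \ref{cor:cut_post1} can absorb via Assumption \ref{ass:prior_varphi}(ii).
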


\begin{lemma}\label{lem:PAC-Bayes2}	
Under Assumptions \ref{ass:loss_opt} and \ref{ass:loss_moments}, the following statements hold:
\begin{flalign}
&\E \left[\int \Mn(\theta,\theta^\star)\dt p_\cut(\theta\mid y)\right] \le\frac{\lambda_2C_2^2}{n_2} + \label{eq:eta_PAC} \E \inf_{q\in\mathcal{F}_{\eta\mid\varphi}} \E_{\varphi\sim\wh{q}} \\
&
\begin{cases}
\int \Mn_{n_2}(\theta,\theta^\star)\dt q(\eta\mid\varphi)
+ \mathcal{D}_\alpha\{q(\eta\mid\varphi)\|p(\eta\mid\varphi)\} / \lambda_2
& \alpha \geq 1,\;\lambda_2>0;\\
\int \Mn_{n_2}(\theta,\theta^\star)\dt q(\eta\mid\varphi)+\wh\lambda_2{ \mathcal{D}_\alpha\{q(\eta\mid\varphi)\|p(\eta\mid\varphi)\}}{},& 0<\alpha<1,\;\wh\lambda_2\ge\lambda^{-1}_2\frac{\KL(p_\cut^*\|p)}{\mathcal{D}_\alpha(p_\cut^*\|p)}.
\end{cases}	\nonumber
\end{flalign}	
\end{lemma}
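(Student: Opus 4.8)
The plan is to prove \eqref{eq:eta_PAC} by the same PAC--Bayes mechanics behind Lemma~\ref{lem:PAC-Bayes1}, but executed \emph{conditionally on} $\varphi$ and then averaged against the random first-stage OCCP $\wh q=p_\cut(\varphi\mid z)$. The engine is the Donsker--Varadhan change-of-measure inequality applied to the conditional densities: for each fixed $\varphi$ and every $q(\eta\mid\varphi)\in\mathcal{F}_{\eta\mid\varphi}$, taking $h=\lambda_2\{\Mn(\theta,\theta^\star)-\Mn_{n_2}(\theta,\theta^\star)\}$ and dividing by $\lambda_2$,
\[
\int \Mn(\theta,\theta^\star)\,\dt q(\eta\mid\varphi)\le \int \Mn_{n_2}(\theta,\theta^\star)\,\dt q(\eta\mid\varphi)+\frac{\KL\{q(\eta\mid\varphi)\|p(\eta\mid\varphi)\}}{\lambda_2}+\frac{1}{\lambda_2}\log\int e^{h}\,p(\eta\mid\varphi)\,\dt\eta .
\]
This leaves a conditional log-moment term that Assumption~\ref{ass:loss_moments}(ii) is built to control.

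For $\alpha\ge 1$ I would use that the R\'enyi divergence is increasing in $\alpha$ with $\mathcal{D}_1=\KL$, so $\KL\le\mathcal{D}_\alpha$ and the penalty above may be replaced by $\mathcal{D}_\alpha\{q(\eta\mid\varphi)\|p(\eta\mid\varphi)\}/\lambda_2$. Evaluating at $q=p^*(\eta\mid\varphi,w)$ and invoking the conditional optimality in \eqref{cut_varphi}---the OCCP minimizes $\int \Mn_{n_2}(\theta,\theta^\star)\,\dt q+\mathcal{D}_\alpha\{q\|p\}/\lambda_2$ \emph{pointwise in} $\varphi$---collapses the first two terms to the pointwise infimum over $\mathcal{F}_{\eta\mid\varphi}$. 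Since a pointwise infimum never exceeds the infimum of the $\varphi$-average, this bound is dominated by $\inf_{q\in\mathcal{F}_{\eta\mid\varphi}}\E_{\varphi\sim\wh q}[\,\cdots\,]$, the exact object in \eqref{eq:eta_PAC}.

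It remains to integrate over $\varphi\sim\wh q$ and take $\E$ over the data; the left side becomes $\E\int \Mn(\theta,\theta^\star)\,\dt p_\cut(\theta\mid y)$ because $p_\cut(\theta\mid y)=p^*(\eta\mid\varphi,w)\,p_\cut(\varphi\mid z)$. For the residual log-moment term I would apply Jensen's inequality twice, first to pull the $\varphi$-average inside the logarithm and then the data-expectation, so that $\E$ lands on the double integral $\E\int\!\int e^{h}\,p(\eta\mid\varphi)\,p_\cut(\varphi\mid z)\,\dt\eta\,\dt\varphi$ of Assumption~\ref{ass:loss_moments}(ii), which then yields the variance remainder in \eqref{eq:eta_PAC}. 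The genuinely new feature relative to Lemma~\ref{lem:PAC-Bayes1} is that the comparison measure is the random, data-dependent $\wh q$; the ``modified prior'' $p(\eta\mid\varphi)\,p_\cut(\varphi\mid z)$ in Assumption~\ref{ass:loss_moments}(ii) is tailored precisely so this coupling goes through, and the two Jensen steps must be ordered correctly because $\wh q$ depends on $z$.

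The regime $0<\alpha<1$ is where I expect the main obstacle, since the divergence comparison reverses: now $\mathcal{D}_\alpha\le\KL$, so the clean replacement above fails. Borrowing the device from Lemma~\ref{lem:PAC-Bayes1}, at the OCCP $\wh q$ I would rewrite $\KL\{\wh q\|p\}/\lambda_2=\{\lambda_2^{-1}\KL(\wh q\|p)/\mathcal{D}_\alpha(\wh q\|p)\}\,\mathcal{D}_\alpha(\wh q\|p)\le\wh\lambda_2\,\mathcal{D}_\alpha(\wh q\|p)$ for any $\wh\lambda_2\ge\lambda_2^{-1}\KL(\wh q\|p)/\mathcal{D}_\alpha(\wh q\|p)$, the admissible range in \eqref{eq:eta_PAC}. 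The delicate point is that inflating the divergence coefficient changes the regularizer, whereas the optimality identity used for $\alpha\ge1$ holds only for coefficient $1/\lambda_2$; reconciling this with the reweighted penalty $\wh\lambda_2\mathcal{D}_\alpha$ before passing to the infimum over $\mathcal{F}_{\eta\mid\varphi}$ is what requires care. Compounding matters, the ratio $\KL(\wh q\|p)/\mathcal{D}_\alpha(\wh q\|p)$ is itself random and $\varphi$-dependent, so the coefficient bookkeeping must be carried out consistently inside $\E_{\varphi\sim\wh q}$, and getting this interaction right is the crux of the argument.
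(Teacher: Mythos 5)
Your mechanism is the same as the paper's: a change-of-measure / PAC--Bayes step, Assumption \ref{ass:loss_moments}(ii) with the modified prior $p(\eta\mid\varphi)\,p_\cut(\varphi\mid z)$ to control the exponential-moment term, and optimality of the second-stage OCCP to pass to an infimum. The only structural difference is that you apply the change of measure conditionally on $\varphi$ and then need Jensen twice, whereas the paper applies Donsker--Varadhan once on the joint space $\Phi\times\mathcal{E}$ with reference $p(\eta\mid\varphi)\wh{q}(\varphi)$, where the chain-rule identity $\KL\{q(\eta\mid\varphi)\wh{q}(\varphi)\|p(\eta\mid\varphi)\wh{q}(\varphi)\}=\int\KL\{q(\eta\mid\varphi)\|p(\eta\mid\varphi)\}\dt\wh{q}(\varphi)$ does the work of your first Jensen step; both routes are valid and give the same remainder. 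Your $\alpha\ge1$ case is complete and correct: replace $\KL$ by $\mathcal{D}_\alpha$, use pointwise-in-$\varphi$ optimality of $\wh{q}(\eta\mid\varphi)$, and bound the $\wh{q}(\varphi)$-average of pointwise infima by the infimum of the average (the paper states this as an equality via the definition in \eqref{cut_varphi}).

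The gap is that you never finish $0<\alpha<1$: after the substitution $\KL\{\wh{q}(\eta\mid\varphi)\|p(\eta\mid\varphi)\}/\lambda_2\le\wh\lambda_2\,\mathcal{D}_\alpha\{\wh{q}(\eta\mid\varphi)\|p(\eta\mid\varphi)\}$ you flag the passage to $\inf_{q\in\mathcal{F}_{\eta\mid\varphi}}$ as ``the crux'' and stop, so this case has no proof in your write-up. For comparison, the paper closes it in one line, asserting that the $\wh\lambda_2$-weighted expression evaluated at $\wh{q}(\eta\mid\varphi)$ equals the infimum ``using the definition of $\wh{q}(\eta\mid\varphi)$ in \eqref{cut_varphi}''. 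Your reservation about that step is mathematically justified: \eqref{cut_varphi} makes $\wh{q}(\eta\mid\varphi)$ the minimizer of $q\mapsto\int\Mn_{n_2}(\theta,\theta^\star)\dt q+\lambda_2^{-1}\mathcal{D}_\alpha(q\|p)$, not of the $\wh\lambda_2$-weighted functional, and for a non-minimizer one only gets $\inf_q\le\text{value at }\wh{q}$, which points the wrong way. What the definition does give rigorously is $\int\Mn_{n_2}\dt\wh{q}+\lambda_2^{-1}\mathcal{D}_\alpha(\wh{q}\|p)\le\inf_q\{\int\Mn_{n_2}\dt q+\wh\lambda_2\mathcal{D}_\alpha(q\|p)\}$ (optimality at weight $\lambda_2^{-1}$, then $\wh\lambda_2\ge\lambda_2^{-1}$ since $\KL\ge\mathcal{D}_\alpha$ here), which leaves the residual $\{\KL(\wh{q}\|p)-\mathcal{D}_\alpha(\wh{q}\|p)\}/\lambda_2$ unaccounted for relative to what the lemma states; so the stated bound cannot be recovered by coefficient bookkeeping alone, and to reproduce the lemma you must make the same assertion the paper makes. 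Your final observation is correct and consistent with downstream usage: the ratio is $\varphi$-dependent, so $\wh\lambda_2$ must dominate $\lambda_2^{-1}\KL\{\wh{q}(\eta\mid\varphi)\|p(\eta\mid\varphi)\}/\mathcal{D}_\alpha\{\wh{q}(\eta\mid\varphi)\|p(\eta\mid\varphi)\}$ for $\wh{q}(\varphi)$-almost every $\varphi$, which is exactly how Assumption \ref{ass:prior_eta}(ii) is later formalized via $V_n^\alpha(\varphi)$.
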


Lemma \ref{lem:PAC-Bayes1} states a PAC-Bayes inequality on the OCCP for $\varphi$. Standard PAC-Bayes results only treat joint optimization and KLD, whilst our OCCP treat sequential optimization and other divergences. Thus, the regularity conditions and result for OCCP deviate from the usual PAC-Bayes conditions. Lemma \ref{lem:PAC-Bayes2} contains a novel variation of the usual PAC-Bayes inequality that leverages sequential nature of the OCCP. Lemmas \ref{lem:PAC-Bayes1} and \ref{lem:PAC-Bayes2} are the first to our knowledge to treat Gibbs posteriors that are specified as a sequential optimization problem and built using divergences besides KLD. As simulations in Section \ref{sec biased mean} have shown, the choice of divergence has an impact on behavior of the OCCP, which is reflected in nature of the result. Lemmas \ref{lem:PAC-Bayes1} and \ref{lem:PAC-Bayes2} also provide a measure of control on behavior of the expected integrated loss associated with the OCCP. As anticipated, the results differ by choice of $\alpha$ for the R\'{e}nyi divergence, and are specific to the sequential nature of the OCCP.

\subsection{Posterior Concentration}
The control given by Lemmas \ref{lem:PAC-Bayes1} and \ref{lem:PAC-Bayes2} allows us to deduce posterior concentration rates for different components in the OCCP, and more importantly to understand how $\lambda_1$ and $\lambda_2$ must be chosen to ensure this posterior concentration (see Remark \ref{rem:inter} for details). To simplify the results, we set $n_1= n_2= n$. Extending such results to $n_1\ne n_2$ requires minor but tedious changes.

To deduce a posterior concentration rate, we require regularity conditions akin to the prior mass condition employed in Bayesian nonparametrics, but specific to our variational inference setting. While we treat general loss functions, divergences other than KLD, and a different variational optimization problem, the conditions employed in Theorem 2.6 of \cite{alquier2020concentration} to deduce posterior concentration of variational posterior approximations for likelihood loss functions, provide a convenient scaffold to generalize such regularity conditions to our setting.

\begin{assumption}\label{ass:prior_varphi}
There exists a sequence $\epsilon_{1,n}$ such that (i) there is a non-random distribution $\rho_{1,n}\in\mathcal{F}_\varphi$ for which $\int \Ln(\varphi,\varphi^\star)\dt \rho_{1,n}(\varphi)\lesssim \epsilon_{1,n}$; (ii) when $\alpha \in (0,1)$, $\exists$ a random variable $W^\alpha_n$ with $\E(W^\alpha_n)\le C$ for all $n$ large enough, such that
$\frac{\KL\{\wh{q}(\varphi)\|p(\varphi)\}}{\mathcal{D}_\alpha\{\wh{q}(\varphi)\|p(\varphi)\}}\leq W^\alpha_n$ (with probability 1); (iii) $\mathcal{D}_\alpha\{\rho_{1,n}(\varphi)\|p(\varphi)\}\lesssim \lambda_1\epsilon_{1,n}$ $\forall$ $\alpha>0$. 
\end{assumption}

\begin{cor}\label{cor:cut_post1}
Under Assumptions \ref{ass:loss_opt}(i), \ref{ass:loss_moments}(i) and \ref{ass:prior_varphi}, for $\lambda_1=\lambda_{1,n}\rightarrow\infty$ such that $\lambda_{1,n}/(n\epsilon_{1,n})\rightarrow0$, and for some $M$ large enough, possibly $M\rightarrow\infty$, 
$$
\E\left( \mathrm{P}_\cut\left[\left\{\varphi\in\Phi: \Ln(\varphi,\varphi^\star)\ge M\epsilon_{1,n}\right\}\mid z\right]\right)\lesssim 1/M.
$$
\end{cor}

To obtain concentration of the joint OCCP, we require a version of local regularity conditions in Assumption \ref{ass:prior_varphi} that cater for sequential nature of the OCCP of $\eta\mid\varphi$.

\begin{assumption}\label{ass:prior_eta}
Let $\rho_{1,n}$ be as in Assumption \ref{ass:prior_varphi}. There exists a sequence $\epsilon_{2,n}$ such that: (i) there is a non-random distribution $\rho_{2,n}\in\mathcal{F}_{\eta\mid\varphi}$ satisfying $\int \Mn(\theta,\theta^\star)\dt \rho_{2,n}(\eta\mid\varphi)\dt\rho_{1,n}(\varphi)\lesssim \epsilon_{2,n}$; (ii) when $0<\alpha<1$, there exists a random variable $V_n^\alpha(\varphi)$, with $\E\{\E_{\varphi\sim\rho_{1,n}} V_n^\alpha(\varphi)\}\le C$ for all $n$ large enough, such that
$\frac{\KL\{\wh{{q}}(\eta\mid\varphi)\|p(\eta\mid\varphi)\}}{\mathcal{D}_\alpha\{\wh{q}(\eta\mid\varphi)\|p(\eta\mid\varphi)\}}\le V_n^\alpha(\varphi)$; (iii) $\int \dt \rho_{1,n}(\varphi)\mathcal{D}_\alpha\{\rho_{2,n}\|p(\eta\mid\varphi)\}\lesssim \lambda_2\max\{\epsilon_{1n},\epsilon_{2n}\}$ for $\alpha>1$; (iv) $\KL(\rho_{1,n}\times\rho_{2,n}\|p)\lesssim \lambda_2\max\{\epsilon_{1n},\epsilon_{2n}\}$. 
\end{assumption}

\begin{rem}
\upshape Assumptions \ref{ass:prior_eta}(i), (iv) are similar to existing conditions on the optimizer employed in the study of concentration for variational posteriors (\citealp{alquier2020concentration}), but are modified to account for sequential nature of the OCCP. In contrast, Assumptions \ref{ass:prior_eta}(ii), (iii) are specific to R\'{e}nyi divergence. As $\mathcal{D}_\alpha(q\|p)\le \KL(q\|p)$ for $\alpha\in(0,1)$, $V_n^\alpha(\varphi)\ge1$, and Assumption \ref{ass:prior_eta}(ii) stipulates that the cut posterior obtained in the second stage using R\'{e}nyi divergence must not cause the KLD to explode. This condition can often be checked analytically, but validation will depend on the chosen variational family and underlying priors. Assumption \ref{ass:prior_eta}(iii) is the R\'{e}nyi divergence equivalent of the prior mass condition maintained for KLD in Assumption \ref{ass:prior_eta}(iv) and has a similar interpretation. We show that Assumptions \ref{ass:prior_varphi}-\ref{ass:prior_eta} are satisfied for the biased means example in Appendix \ref{sec biased means appendix}.
\end{rem}

\begin{cor}\label{cor:cut_post2}
Assumptions \ref{ass:loss_opt}-\ref{ass:prior_eta} are satisfied.  For $j,k\in\{1,2\}$, $\lambda_j=\lambda_{j,n}\rightarrow\infty$ and $\lambda_{j,n}/(\epsilon_{k,n}n)\rightarrow0$, for some $M$ large enough, possibly $M\rightarrow\infty$, 
$$
\E\left( \mathrm{P}_\cut\left[\left\{\theta\in\Phi\times\mathcal{E}: \Mn(\theta,\theta^\star)\ge M\max(\epsilon_{1n},\epsilon_{2n})\right\}\mid z,w\right]\right)\lesssim 1/M.
$$
\end{cor}

\begin{rem}\label{rem:inter}
\upshape We briefly discuss two key insights from Corollaries \ref{cor:cut_post1} and \ref{cor:cut_post2}. First, by decoupling $\varphi$ and $\eta$, we can choose the learning rate for the first module $\lambda_1$ to deliver the best concentration rate for $\varphi$ without worrying that the learning rate or prior for $\eta$ will impact our inferences. For example, if $\varphi$ represents the parametric components in the model, we can choose $\lambda_1$ to achieve the standard parametric rate and make the inference of $\varphi$ as accurate as possible. \cite{Moss2024} consider such a situation, where asymptotic normality results on $\varphi$
are obtained by separating the parametric and nonparametric components ($\eta$
in our notation) of a hidden Markov model using a clever choice of prior, 
and then performing inference on $\eta\mid\varphi$ using the conditional posterior.
Second, and crucially, if $\eta$ concentrates at a slower rate than $\varphi$, Corollary \ref{cor:cut_post1} shows that this slower convergence rate does not negatively impact inferences on $\varphi$. However, if the situation is reversed, and the rate of convergence for $\varphi$ is slower than that of $\eta$, then our results show that the resulting rate of convergence will generally be sub-optimal due to the diameter of the sets under which concentration is computed, which is the larger of the two rate sequences. As a consequence, Corollary \ref{cor:cut_post2} suggests that one may wish to sequester the components of primary interest into a loss by themselves when possible, so that a fast rate is achievable, and then conduct conditional inference on the components of secondary interest, at a possibly slower rate.
\end{rem}

\begin{rem}
\upshape Inspection of the proof of Corollary \ref{cor:cut_post2} shows that  the choices of $\lambda_1$ and $\lambda_2$, and hence their influence on the rates of convergence, can be decoupled in certain settings. This decoupling will occur primarily when $\Mn_n(\theta)$ only depends on $\eta$ but not $\varphi$; such as if the occurrence of $\varphi$ in the loss $\Mn_n(\theta)$ is replaced by a preliminary estimator of $\varphi$, say $\wh{\varphi}=\int \varphi\wh{q}(\varphi)\dt\varphi$; or if the loss $\Mn_n(\eta,\varphi)$ is chosen so that $\eta$ and $\varphi$ are orthogonal, in the sense that $\int \Mn_n(\eta,\varphi)\wh{q}(\varphi)\dt\varphi=0$ for each $\eta$. In each of these examples,  $\int \Mn(\theta)\wh{q}(\eta\mid\varphi)\wh{q}(\varphi)$ will be well-approximated by $\int \Mn(\eta,\varphi^\star)\wh{q}(\eta\mid\varphi)\wh{q}(\varphi)$. While interesting, formalizing such settings and the resulting behavior requires additional assumptions beyond those imposed by Assumptions  \ref{ass:loss_opt}-\ref{ass:prior_eta} and so we propose to study this setting in future research.
\end{rem}

\section{Removing hidden confounding}  \label{sec hidden confounding}
In this section, we consider using experimental data to remove hidden confounding in models fitted to observational data \citep{Kallus2018}. The goal is to study the effects of a binary treatment using two data sources. The first is a large observational study that covers the target population but is susceptible to hidden confounding, as there may be unrecorded covariates that affect treatment allocation and outcome. The second is a small data set from a randomized controlled trial (RCT), which is unconfounded and has covariate overlap with the observational study, but is not representative of target population. We model the observational data using a GP and the RCT data via a parametric correction. This setting is similar to the biased means example, in that standard Bayesian inference can be misleading if the prior for the nonparametric GP or parametric correction parameters is poorly specified, and such issues can be addressed through the OCCP. Our experiments indicate that R\'{e}nyi divergence with a small $\alpha$ yields credible intervals with higher coverage probabilities of containing the true parameters. As $\alpha$ decreases to zero, the variational density changes its behavior from mode-finding to mass-covering. Hence, there is lower tendency of variance underestimation even when strong independence assumptions are made in the space of candidate posteriors for computational efficiency.

Let $X \in \mathbb{R}^p$, $T\in \{0,1\}$ and $Y\in \mathbb{R}$ denote the covariates, treatment and outcome respectively. The observational data $\{X_i^c, T_i^c, Y_i^c \mid i=1, \dots, n_1\}$ and RCT data $\{X_j^u, T_j^u, Y_j^u \mid j=1, \dots, n_2\}$ are assumed to consist of iid draws from a population with hidden confounding (indicated by event $E^c$) and one that is unconfounded (indicated by event $E^u$) respectively, with $n_1 \gg n_2$. We seek to estimate $\tau(x) = \E(Y \mid T=1, X=x, E^u) - \E(Y \mid T=0, X=x, E^u)$, the {\em conditional average treatment effect} (CATE). Using only the RCT data to estimate CATE is limiting in terms of scale of data and domain of covariates, but the observational data can only produce an estimate of $\omega(x) = \omega_1 (x) - \omega_0 (x)$ where $\omega_t(x) = \E(Y \mid T=t, X=x, E^c)$ for $t=0, 1$. The difference $c_f(x) = \tau(x) - \omega(x)$ is the {\em confounding effect}, which is zero if the observational study is unconfounded.

Let $q(X_j^u) = T_j^u/e^u(X_j^u) - (1- T_j^u)/\{ 1 - e^u(X_j^u) \}$ be a signed reweighting function, where $e^u(x) = \P(T=1 \mid X=x, E^u)$ is the propensity score for the RCT data. Then $w_j = q(X_j^u) Y_j^u$ can be shown to be an unbiased estimate of $\tau(X_j^u)$. \cite{Kallus2018} propose to learn $\tau$ by  first obtaining an estimate $\widehat{\omega}(x)$ of $\omega(x)$ from the observational data. Assuming $c_f(x) = \eta^\top x$, $\eta$ is then estimated by $\widehat{\eta} = \argmin_\eta \sum_{j=1}^{n_2} \{ w_j - \widehat{\omega}(X_j^u) - \eta^\top X_j^u \}^2$ and $\tau(x)$ is estimated as $\widehat{\tau}(x) = \widehat{\omega}(x) + \widehat{\eta}^\top x$.

\subsection{An optimization-centric generalized Bayesian extension}
We consider a Bayesian analogue based on cutting feedback. Let $X = (\tilde{X}^\top, \breve{X}^\top)^\top$ be a partitioning of $X$ into continuous variables $\tilde{X} \in \mathbb{R}^{p_1}$ and dummy variables $\breve{X} \in \{0,1\}^{p_2}$ encoding the categorical variables. Let
\[
\omega_t (x) = \delta_t^\top \breve{x} + \zeta_t (\tilde{x})
\quad \text{for} \quad  t = 0, 1,
\]
where the effects of categorical predictors are modeled using linear regression, while zero mean independent GP priors are placed on $a(\tilde{x}) = \{\zeta_1(\tilde{x}) + \zeta_0(\tilde{x}) \} /2$ and $d(\tilde{x}) = \{\zeta_1(\tilde{x}) - \zeta_0(\tilde{x}) \}/2$. We consider stationary squared exponential covariance functions,
$k_a(h) = \sigma_a^2 \exp \left(- \sum\nolimits_{k=1}^{p_1} \ell_{ak}^2 h_{k}^2/2 \right)$ and $k_d(h) = \sigma_d^2 \exp \left(- \sum\nolimits_{k=1}^{p_1} \ell_{dk}^2 h_{k}^2/2 \right)$,
for $a(\cdot)$ and $d(\cdot)$ respectively, where $h = (h_1, \dots, h_{p_1})^\top$ denotes the difference between two inputs, and allow $\sigma_a, \sigma_d$ $\ell_{ak}$, $\ell_{dk}$ to take nonpositive values. The lengthscales $\{\ell_{ak}\}$ and $\{\ell_{dk}\}$ determine the covariance decay rate, and a variable is effectively removed from inference when its lengthscale is zero (automatic relevance determination).

For data from the observational study, we consider the model,
\begin{align*}
Y_i^c &= \omega_{T_i^c} (X_i^c) + \epsilon_i, 
\quad \text{for} \quad i=1, \dots, n_1, 
\end{align*}
where $\epsilon_i \overset{iid}{\sim} \N(0, \sigma^2)$. Let $a^c = (a(\tilde{X}_1^c), \dots, a(\tilde{X}_{n_1}^c))^\top$, $d^c = (d(\tilde{X}_1^c), \dots, d(\tilde{X}_{n_1}^c))^\top$, $\epsilon = (\epsilon_1, \dots, \epsilon_{n_1})^\top$, $z = (Y_1^c, \dots, Y_{n_1}^c)^\top$, $T^c$ be a $n_1 \times n_1$ diagonal matrix with $i$th diagonal element $2T_i^c - 1$, and $\breve{X}^c$ be a $n_1 \times 2p_2$ matrix with $i$th row $((1-T_i^c)\breve{X}_i^{c\top}, T_i^c\breve{X}_i^{c\top})^\top$. Then
$z \mid a^c, d^c, \delta, \sigma^2 \sim \N(\omega^c, \sigma^2 I_{n_1})$, where $\omega^c = a^c + T^c d^c + \breve{X}^c \delta$ and $\delta = (\delta_0^\top, \delta_1^\top)^\top$. We specify the priors, $\delta \sim \N(\mu_\delta^0, \Sigma_\delta^0)$, $\sigma^2 \sim \text{IG}(a^0, b^0)$, $\sigma_a \sim \N(0, v_{\sigma_a}^0)$, $\sigma_d \sim \N(0, v_{\sigma_d}^0)$ and $\ell_{ak} \sim \N(\mu_{\ell_a}^0, v_{\ell_a}^0)$, $\ell_{dk} \sim \N(\mu_{\ell_d}^0, v_{\ell_d}^0)$ for $k=1, \dots, p_1$.

For the RCT data, we consider the model, 
\begin{align*}
w_j &= \omega(X_j^u) + \eta^\top X_j^u + \varepsilon_j, \quad j = 1, \dots, n_2,
\end{align*}
where $\{\varepsilon_j\}$ are zero mean errors, and let the prior on $\eta$ be $\N(\mu_\eta^0, \Sigma_\eta^0)$. Following \cite{Kallus2018}, we find the OCCP by minimizing a sum of squared errors loss, given by $\|w - \omega^u - X^u \eta\|^2$, where $w=(w_1, \dots, w_{n_2})^\top$, $\omega^u = 2d^u + \breve{X}^u(\delta_1- \delta_0)$, $d^u = (d(\tilde{X}_1^u), \dots, d(\tilde{X}_{n_2}^u))^\top$, $\breve{X}^u = (\breve{X}_1^u, \dots, \breve{X}_{n_2}^u)^\top$ and $X^u = (X_1^u, \dots, X_{n_2}^u)^\top$.

\subsection{Sparse GP approximation} \label{sec_sparseGP}
For efficient inference, we consider a sparse GP approximation that relies on inducing inputs, and inducing variables drawn from a standardized GP so that they are independent of the hyperparameters of the covariance functions apriori \citep{Titsias2013}. Let $r_a = (r_{a,1}^\top, \dots, r_{a, M}^\top)^\top$ and $r_d = (r_{d,1}^\top, \dots, r_{d, M}^\top)^\top$  be inducing inputs where $r_{a,i}, r_{d, i} \in \mathbb{R}^{p_1}$, and  $u_a = (s(r_{a,1}), \dots, s(r_{a, M}))^\top$ and $u_d = (s(r_{d,1}), \dots, s(r_{d, M}))^\top$ be inducing variables. The standardized function $s:\mathbb{R}^{p_1} \rightarrow \mathbb{R}$ maps inducing inputs to zero mean GPs with the standardized squared exponential covariance function, $k_s(h) = \exp(-h^\top h/2)$. Thus $p(u_a)=\N(0, K_{u_a})$ where $K_{u_a} = [k_s(r_{a,i}-r_{a,j})]$ is independent of $(\ell_a, \sigma_a)$, and 
$p(a^c \mid u_a, \sigma_a, \ell_a) = \N(K_{a^c, u_a} K_{u_a}^{-1} u_a,\; K_{a^c} - K_{a^c, u_a} K_{u_a}^{-1} K_{u_a, a^c})$, where $K_{a^c, u_a} = \Cov(a^c, u_a)$ and $K_{a^c} = \Cov(a^c)$. Similar results hold for $u_d$ and $d^c$. Let $\varphi = \{\delta, \sigma, a^c, u_a, \sigma_a,  \ell_a, d^c, u_d, \sigma_d,\ell_d\}$ denote all variables in the augmented model for the confounded data. We have $p(z, \varphi) = p(z \mid \varphi) p(\varphi)$, where $p(z \mid \varphi) = p(z \mid a^c, d^c, \delta, \sigma^2)$ and $p(\varphi) =p(\delta) p(\sigma^2) p(a^c \mid u_a, \sigma_a, \ell_a) p(u_a) p(\sigma_a)  p(\ell_a) 
p(d^c \mid u_d, \sigma_d, \ell_d)  p(u_d) p(\sigma_d) p(\ell_d)$.

\subsection{OCCP for sparse GP}
The first stage loss function is $L(z \mid \varphi) = -\log p(z \mid  \varphi)$ and $\lambda_1^1 = 1$ for KLD. We assume $q(\varphi)$ is of the form,
\begin{align*}
q(\delta) q(\sigma^2) 
p(a^c \mid u_a, \sigma_a, \ell_a) q(u_a) q(\sigma_a)
p(d^c \mid u_d, \sigma_d, \ell_d) q(u_d) q(\sigma_d) 
\prod_{k=1}^{p_1} q(\ell_{a,k})q(\ell_{d, k}) ,
\end{align*}
where $q(\delta)$ is $\N(\mu_{\delta}^q, \Sigma_{\delta}^q)$, $q(\sigma^2)$ is $\text{IG}(a^q, b^q)$,  $ q(u_a)$ is $\N(\mu_{u_a}^q, \Sigma_{u_a}^q)$, $q(\sigma_a)$ is $\N(\mu_{\sigma_a}^q, v_{\sigma_a}^q)$, $ q(u_d)$ is $\N(\mu_{u_d}^q, \Sigma_{u_d}^q)$,   $q(\sigma_d)$ is $\N(\mu_{\sigma_d}^q, v_{\sigma_d}^q)$,  $q(\ell_{a,k})$ is $\N(\mu_{\ell_{a,k}}^q, v_{\ell_{a,k}}^q)$ and $q(\ell_{d, k})$ is $\N(\mu_{\ell_{d, k}}^q, v_{\ell_{d, k}}^q)$. The conditional densities of $a^c$ and $d^c$ in $q(\varphi)$ are identical to those in the prior, while other densities are optimal under mean-field assumption for KLD.

The second stage loss function is $M(w \mid \varphi, \eta) = \|w -\omega^u - X^u \eta\|^2/2$. We interpret the learning rate for KLD as a dispersion parameter and set $\lambda_2^1 = \{\sum_{j=1}^{n_2} (w_j - \widehat{\omega}(X_j^u) - \widehat{\eta}^\top X_j^u)^2 / (n_2 - p)\}^{-1}$ based on method of moments. \cite{Agnoletto2025} argue for this estimate based on consistency and good frequentist coverage, over an empirical estimate based on information number matching \citep{Lyddon2019}. We estimate $\widehat{\omega}(X_j^u)$ using its posterior mean based on results from stage 1, and $\widehat{\eta}$ by minimizing the resulting loss. A learning rate of $\lambda_2^1$ for KLD is equivalent to assuming $\epsilon_j \sim \N(0, 1/\lambda_2^1)$ and the true conditional posterior in this case is $\eta \mid w, \varphi \sim \N(\mu_\eta, \Sigma_\eta)$ where $\Sigma_\eta = (\lambda_2^1 X^{u\top} X^u + {\Sigma_\eta^0}^{-1})^{-1}$ and $\mu_\eta = \Sigma_\eta \{ \Sigma_\eta^{0-1} \mu_\eta^0 + \lambda_2^1 X^{u \top}(w - \omega^u)\}$. Hence, we assume $q(\eta \mid \varphi)$ as $\N(\mu_\eta^q, \Sigma_\eta^q)$, where $\mu_\eta^q = a_\eta^q - \lambda_2^1 \Sigma_\eta X^{u \top} \omega^u$ depends on $\varphi$ through $\omega^u$. The variational parameters to be optimized are $a_\eta^q \in \mathbb{R}^p$ and $\Sigma_\eta^q \in \mathbb{R}^{p \times p}$, which are independent of $\varphi$. Variational parameters are optimized using coordinate descent when $\alpha=1$, and gradient descent when $\alpha \neq 1$. For the sparse GP, $M=10$ regularly spaced inducing inputs are used in each case.

\subsection{Simulation study}
Consider the simulation study in \cite{Kallus2018} with an unmeasured confounder $U$. Confounded observations are simulated via $T^c_i \sim \text{Bern}(0.5)$, and $(X_i^c, U_i^c)\mid T_i^c$ follows a bivariate normal distribution with zero mean and unit variance, with $\Cov(X_i^c, U_i^c\mid T_i^c) = T_i^c - 0.5$ for $i=1, \dots, 1000$. Unconfounded observations are generated via $T^u_j \sim \text{Bern}(0.5)$, $X_j^u \sim U[-1,1]$ and $U_j^u \sim \N(0,1)$ for $j=1, \dots, 100$. For both data sets, the outcome is $Y = 1 + T + X + 2TX + 0.5X^2 + 0.75TX^2 + U + 0.5\epsilon$ where $\epsilon \sim \N(0,1)$. The confounding effect $c_f(x) = -x$ and true value of $\eta$ is $-1$. The OCCP for $\alpha \in \{0.01, 0.05, 0.25, 0.999, 2.5\}$ is computed for 50 simulated data sets.

\begin{table}[tb!]
\centering
\resizebox{\columnwidth}{!}{
\begin{tabular}{llrrrrr}
\hline
\multicolumn{2}{c}{$\alpha$} & 0.01 & 0.05 &  0.25 &  0.999 & 2.5 
\\  \hline
\multicolumn{2}{c}{$\lambda_1^\alpha$} & 0.568 $\pm$ 0.007 & 0.742 $\pm$ 0.008 & 0.893 $\pm$ 0.005 & 1.000 $\pm$ 0.000 & 1.057 $\pm$ 0.005 \\
\multicolumn{2}{c}{$\lambda_2^\alpha$} & 0.009 $\pm$ 0.001 & 0.021 $\pm$ 0.003 & 0.035 $\pm$ 0.004 & 0.044 $\pm$ 0.005 & 0.049 $\pm$ 0.005 \\ \hline
\multirow{3}{*}{$\eta$} & Mean & -1.064 & -1.067 & -1.054 & -1.056 & -1.058  \\
& RMSE & 0.862 & 0.863 & 0.864 & 0.863 & 0.862 \\
& Coverage & 0.980 & 0.980 & 0.940 & 0.920 & 0.880 \\
\hline
\end{tabular}}
\caption{Simulation study. Mean and standard deviation of learning rates (first 2 rows). Mean of $\mu_\eta$, RMSE of $\mu_\eta$ from true $\eta$, and coverage probability of true $\eta$ (last 3 rows). \label{tab_GP_lam}}
\end{table}

From Table \ref{tab_GP_lam}, the learning rates decrease as $\alpha$ drops to 0. The averaged posterior mean of $\eta$ is close to the true value $-1$, but the large RMSE indicates significant deviations for individual simulated data sets. This is likely due to the small unconfounded data set and limited domain of $X^u$. The coverage probability based on 95\% credible intervals is higher for $\alpha$ closer to 0, due to the larger estimated posterior variance of $\eta$. Estimating $\lambda_2^1$ using method of moments improved the coverage probability significantly, which was only about 0.5 if $\lambda_2^1 = 1$. For $\omega(x)$ and the CATE $\tau(x)$, Figure \ref{GP_omegtau} shows that RMSE of the posterior predictive means is minimized at $x=0$ and increases gradually as $|x|$ increases. RMSE of $\omega(x)$ is lower than $\tau(x)$ for a wide range of $x$, suggesting that the inability to recover the true $\eta$ accurately likely contributed to the high RMSE of $\tau(x)$. The coverage probability of the true $\tau(x)$ is highest for $\alpha \in \{0.01, 0.05\}$, and decreases steadily as $\alpha$ increases almost everywhere, except close to the border.

\begin{figure}[tb!]
\centering
\includegraphics[width=\textwidth]{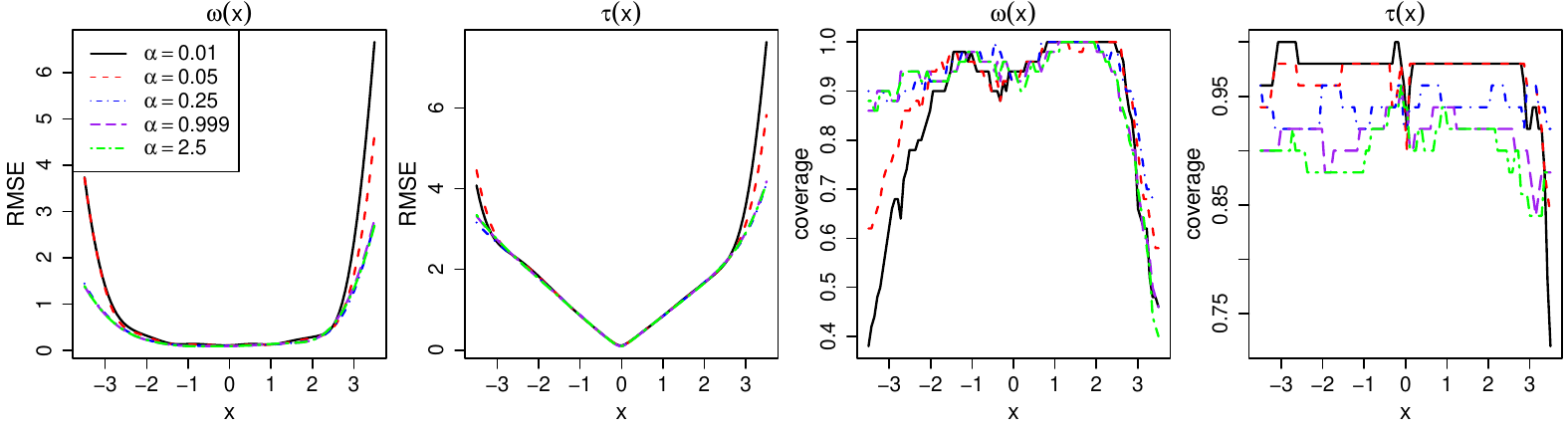}
\caption{\label{GP_omegtau} Simulation study. RMSE of posterior predictive mean from true value, and coverage probability of true value based on 95\% credible intervals for $\omega(x)$ and $\tau(x)$.}
\end{figure}

\subsection{Tennessee STAR (Student/Teacher Achievement Ratio) study}
The Tennessee STAR  study \citep{Krueger1999} was conducted in 1985 to analyze the impact of class size on student outcomes. We consider data from first grade, the sum of the listening, reading and math standardized test scores as outcome $y_i$, and class size as treatment ($T_i=0$ for regular and $T_i=1$ for small). Age (on 1 Jan 1985), gender, race, whether free lunch is given and teacher ID are used as predictors. Omitting observations with missing information, $n=4139$ observations remain. The dimension of $X_i$ is $p=244$, with $p_1 = 1$ and $p_2=243$. As STAR is a RCT, the propensity score $e^u = \text{P}(T=1)$ is independent of predictors, and computed as overall proportion of students assigned to small class size. An unbiased estimate of $\tau(X_i)$ is $q_i Y_i = Y_i/ (e^u + T_i -1)$ for $i=1, \dots, n$. Following \cite{Kallus2018}, we create unconfounded and confounded samples using a variable (rural or urban), that is known to impact student outcome but hidden from analysis. The unconfounded sample is obtained by randomly sampling a fraction $\gamma \in \{0.1, 0.2, \dots, 0.5\}$ of the rural students, whose treatment assignment is random. The confounded sample includes all students with $T=0$ who are not in the unconfounded sample, and students with $T=1$ whose outcomes are in the lower half among rural, and among urban students. Treatment effect estimates in the confounded sample are thus biased downwards. The OCCP is computed for $\alpha \in \{0.01, 0.05, 0.25, 0.999, 2.5\}$.

\begin{figure}[tb!]
\centering
\includegraphics[width=0.95\textwidth]{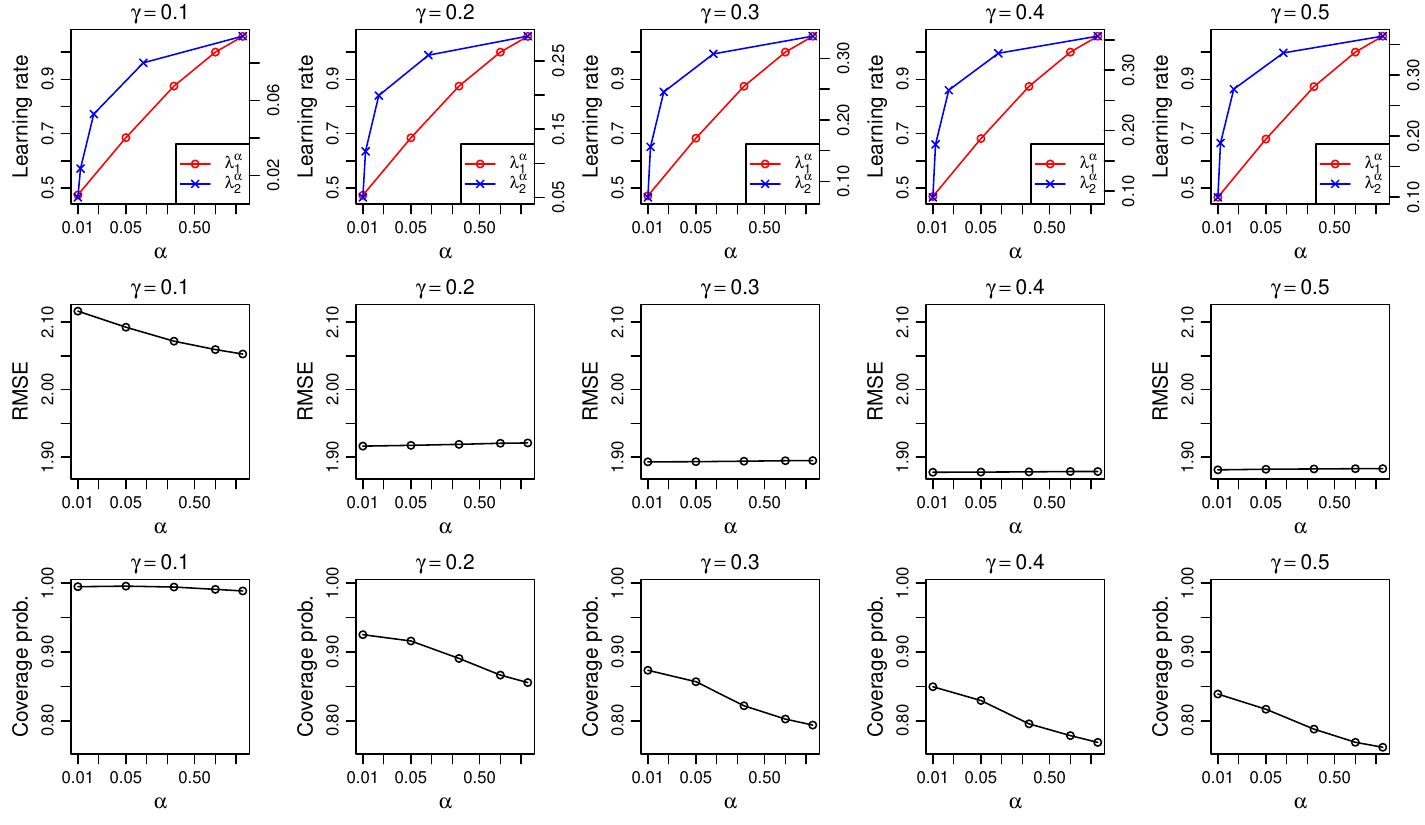}
\caption{\label{fig_star_results} STAR. Learning rates (left axis for $\lambda_1^\alpha$ and right axis for $\lambda_2^\alpha$), RMSE of posterior predictive mean of $\tau(x)$ from its unbiased estimate, and coverage probability of the unbiased estimate of $\tau(x)$ based on 95\% credible intervals.}
\end{figure}

As before, learning rates in Figure \ref{fig_star_results} increase as $\alpha$ increases for each $\gamma$. Learning rates in stage 2 are much lower than in stage 1 and increase with $\gamma$, unlike those in stage 1, which are independent of $\gamma$. This is likely due to the increased precision in estimating $\eta$, as the unconfounded sample size increases with $\gamma$. Comparing the posterior predictive means of $\tau(x)$ with its unbiased estimate on a heldout sample (includes all observations except the unconfounded sample), the RMSE of $\tau(x)$ decreases steadily as $\gamma$ increases, but are almost indistinguishable across $\alpha$, except for a declining trend when $\gamma=0.1$. As $\gamma$ increases, the confounded sample size decreases, and the downward estimation bias is reduced, which leads to a reduction in RMSE. Finally, coverage probability of the unbiased estimate of $\tau(x)$ decreases steadily, as $\alpha$ increases and $\gamma$ increases, as   an $\alpha$ closer to zero tends to yield larger posterior variance estimates. We also observe an improvement in coverage probabilities by estimating $\lambda_2^1$ using method of moments compared to setting it as 1.

\section{Misspecified nonparametric copula model} \label{sec misspecified copula model}
Here, we consider the ``type 1'' cut posterior for copula models \citep{Smith2025}, which prevents misspecification of the copula from contaminating inference on marginal densities. The use of R\'{e}nyi divergence with a small $\alpha$ is shown to produce credible intervals with higher coverage probabilities of containing the true parameters in simulations, and better forecasting in a macroeconomic time series.

Let $\bfy = (\bfy_1^\top, \dots, \bfy_n^\top)^\top$ where $\bfy_i = (y_{i1}, \dots, y_{im})^\top$, $F$ be the joint cdf of $\bfy$, $F_{ij}$ be the marginal cdf of $y_{ij}$ and $N= nm$. By \cite{Sklar1959}, there exists a copula function $C: [0,1]^N \rightarrow [0,1]$ that captures all dependence among $\{y_{ij}\}$, such that $F(\bfy) = C(\bfu)$, where $\bfu = (\bfu_1^\top, \dots, \bfu_n^\top)^\top$, $\bfu_i = (u_{i1}, \dots, u_{im})^\top$ and $u_{ij} = F_{ij}(y_{ij})$. Marginal models for $y_{ij}$ are assumed to be invariant with respect to $i$ so that $F_{ij}(\cdot) = F_j (\cdot)$ $\forall i=1, \dots, n$. We have $p(\bfy \mid \varphi, \eta) = c(\bfu \mid \eta) \prod_{j=1}^m \prod_{i=1}^n  f_{j} (y_{ij} \mid \varphi_j) $, where $c(\bfu \mid \eta) = \frac{\partial}{\partial \bfu} C(\bfu \mid \eta)$ is the copula density, $f_{j} (\cdot \mid \varphi_j)$ is the marginal density of $y_{ij}$, and $\eta$ and $\varphi = (\varphi_1^\top, \dots, \varphi_m^\top)\top$ are associated parameters.

We propose to model the marginal densities $\{f_j\}$ using GPs with a logistic stick-breaking representation and a P\'{o}lya-Gamma (PG) augmentation \citep{Linderman2015}. Suppose $\{y_{ij}\}_{i=1}^n$ are distributed in a finite region $\mathcal{V}_j \subset \mathbb{R}$. This data is first discretized by  partitioning $\mathcal{V}_j$ into $p$ regular bins. The midpoint and number of observations in each bin are stored in $\bfx_j = (x_{j1}, \dots, x_{jp})^\top$ and $\bfn_j = (n_{j1}, \dots, n_{jp})^\top$ respectively. Let $\pi_{jk}$ be the probability that $y_{ij}$ falls in the $k$th bin and $\bfpi_j = (\pi_{j1}, \dots, \pi_{jp})^\top$. Then the likelihood of $\{y_{ij}\}_{i=1}^n$ is $\prod_{k=1}^p \pi_{jk}^{n_{jk}}$. The multinomial density can be expressed as sequential conditional binomial densities such that 
\[
\text{Mult}(\bfn_j \mid n, \bfpi_j) = \prod_{k=1}^{p-1}  \binom{N_{jk}}{n_{jk}}  \tilde{\pi}_{jk}^{n_{jk}} (1 - \tilde{\pi}_{jk})^{N_{jk} - n_{jk}},
\]
where $N_{j1} = n$, $\tilde{\pi}_{j1} = \pi_{j1}$, $N_{jk} = n - \sum_{k' <k} n_{jk'}$ and $\tilde{\pi}_{jk} = \pi_{jk}/(1 - \sum_{k' < k} \pi_{jk'})$ for $k=2, \dots, p-1$. We model the binomial conditional probability as $\tilde{\pi}_{jk} = \exp(g_{jk})/\{1 + \exp(g_{jk})\}$, where $g_{jk}$ is the value of a latent function $g_j(\cdot)$ evaluated at $x_{jk}$ and $\bfg_j = (g_{j1}, \dots, g_{j,p-1})^\top$. For density smoothing, we consider $\bfg_j = \bfd_j + H_j \beta_j$, where $\bfd_j = (d(x_{j1}), \dots, d(x_{j,p-1}))^\top$ and $d_j(\cdot)$ follows a zero mean GP with stationary squared exponential covariance function, $k_j(x, x') = \sigma_j^2 \exp\{-\ell_j^ 2 (x -x')^2/2\}$, while $H_j$ is a $(p-1) \times 3$ matrix where the $k$th row contains the basis functions $(x_{jk}, x_{jk}^2, x_{jk}^3)^\top$ and $\beta_j \in \mathbb{R}^3$ are regression coefficients. We specify the priors, $\sigma_j \sim \N(0, v_{\sigma_j}^0)$, $\ell_j \sim \N(0, v_{\ell_j}^0)$ and $\beta_j \sim \N(0, v_{\beta_j}^0 I_3)$. From the PG augmentation \citep{Polson2013}, 
\begin{align*}
\text{Mult}(\bfn_j \mid n, \bfpi_j) = \prod_{k=1}^{p-1} \binom{N_{jk}}{n_{jk}} 2^{-N_{jk}} \exp(\kappa_{jk}g_{jk}) \int_0^\infty \exp(-\omega_{jk} g_{jk}^2/2) p(\omega_{jk}) d\omega_{jk},
\end{align*}
where $\kappa_{jk} = n_{jk} - N_{jk}/2$, $\omega_j = (\omega_{j1}, \dots, \omega_{j,p-1})^\top$ and each auxiliary variable $\omega_{jk}$ is generated from the prior, $p(\omega_{jk}) =\text{PG}(N_{jk}, 0)$. For efficient inference, we consider the sparse GP approximation in Section \ref{sec_sparseGP} again. Let $r_j = (r_{j,1}, \dots, r_{j, M})^\top \in \mathbb{R}^M$ be inducing inputs and $v_j = (s(r_{j,1}), \dots , s(r_{j,M} ))^\top \in \mathbb{R}^M$ be inducing variables, where $s(\cdot)$ is the standardized function defined in Section \ref{sec_sparseGP} with $p_1 = 1$.

\subsection{OCCP for nonparametric copula model}
The ``type 1'' cut posterior of the copula model falls in our framework, as we can take $p(z \mid \varphi) = \prod_{j=1}^m \prod_{i=1}^n f_j(y_{ij} \mid \varphi_j)$ and $p(w \mid \theta) = c(\bfu \mid \eta)$. With the discretization of $\bfy$ and formation of bins with midpoints $\{\bfx_j\}_{j=1}^m$, we only require the bin counts $z = \{\bfn_j\}_{j=1}^m$ for fitting marginal densities in stage 1. Given the fitted marginal densities, we only need $w$, which comprises the indicators of which bins observations are falling in, to compute $\bfu$ for inferring the copula parameter $\eta$ in stage 2.

For the first stage, let $\varphi_j = (\bfd_j^\top, v_j^\top, \ell_j, \sigma_j, \beta_j^\top, \omega_j^\top)^\top$ for $j=1, \dots, m$. The prior $p(\varphi) = \prod_{j=1}^m p(\varphi_j)$ where $p(\varphi_j) = p(\bfd_j \mid v_j, \ell_j, \sigma_j) p(v_j)  p(\ell_j) p(\sigma_j)  p(\beta_j) \prod_{k=1}^p p(\omega_{jk})$. The loss function is $L(z \mid \varphi) = - \log p(z \mid \varphi) =  -\sum_{j=1}^m (\bfg_j^\top \kappa_j - \frac{1}{2} \bfg_j^\top \Omega_j \bfg_j) + C$, where $\Omega_j = \diag(\omega_j)$, $\kappa_j = (\kappa_{j1}, \dots, \kappa_{j,p-1})^\top$ and $C$ is a constant independent of $\varphi$. $\mL_{1, \ls}$ can be evaluated analytically with the aid of the PG augmentation. We assume $q(\varphi) = \prod_{j=1}^m q(\varphi)$, where $q(\varphi_j) = p(\bfd_j \mid v_j, \ell_j, \sigma_j) q(v_j)  q(\ell_j) q(\sigma_j) q(\beta_j) \prod_{k=1}^{p-1} q(\omega_{jk})$, $q(v_j)$ is $\N(\mu_{v_j}^q, \Sigma_{v_j}^{q \top})$, $q(\ell_j)$ is $\N(\mu_{\ell_j}^q, v_{\ell_j}^{q \top})$, $q(\sigma_j)$ is $\N(\mu_{\sigma_j}^q, v_{\sigma_j}^{q \top})$, $q(\beta_j) $ is $\N(\mu_{\beta_j}^q, \Sigma_{\beta_j}^q)$ and $q(\omega_{jk})$ is $\PG(b_{jk}, c_{jk}^q)$. The assumed variational densities are optimal under mean-field assumption for KLD. R\'{e}nyi divergence can also be evaluated analytically except for $\mathcal{D}_\alpha \{p(\omega_{jk}) \| q(\omega_{jk})\}$. Under KLD, $b_{jk}$'s optimal value can be deduced as $N_{jk}$, but it is difficult to optimize for $\alpha \neq 1$, as the PG density does not have a closed form. Hence we only consider KLD for this pair.

The second stage loss function is $M(w \mid \theta) = - \log c(\bfu \mid \psi)$, where $u_{ij} = \sum_{\ell = 1}^k \pi_{j\ell}$ if $y_{ij}$ falls in the $k$th interval. We consider the prior $\N(0, \Sigma_{\eta}^0)$ for $\eta$ after  transforming copula parameters to the real line. Let $\xi$ denote variables in $\varphi$ excluding $\bfd = (\bfd_1, \dots, \bfd_m)^\top$. We assume $q(\theta) = p(\bfd \mid \xi)q(\xi) q(\eta \mid \xi)$, where $p(\bfd \mid \xi)$ and $q(\xi)$ can be inferred from stage 1, while $q(\eta \mid \xi)$ is the conditional Gaussian $\N(\mu_\eta + \Sigma_{\eta, \xi} \Sigma_\xi^{-1} (\xi-\mu_\xi), \Sigma_\eta - \Sigma_{\eta, \xi} \Sigma_\xi^{-1} \Sigma_{\eta, \xi}^\top)$. Fixing $\{\mu_\xi, \Sigma_\xi\}$ at optimal values obtained in stage 1, only $\{ \mu_\eta, \Sigma_{\eta, \xi}, \Sigma_\eta \}$ are optimized in stage 2.

Variational parameters are optimized using coordinate descent if $\alpha=1$ and gradient descent if $\alpha\neq 1$ in stage 1, while stochastic gradient descent is used in stage 2. Constrained observations are transformed onto the real line before fitting the marginal densities, and the number of bins $p = \min(n/2, 400)$. For the sparse GP, $M=10$ regularly spaced inducing points are used. We set $\lambda_1^1 = \lambda_2^1 = 1$ and the variance for normal priors is 100.

\subsection{Misspecified copula simulation}
In this simulation based on \cite{Smith2025}, we generate 60 data sets, each containing $\{\bfy_i \in \mathbb{R}^2 \mid i=1, \dots, 1000\}$, simulated independently from a $t$-copula with one degree of freedom and Kendall's $\tau$ of 0.7. The first marginal is $\text{Lognormal}(1,1)$ and the second marginal is $\text{Gamma}(7, 3)$. We fit a misspecified Gumbel copula whose parameter is Kendall's tau, $\tau \sim U[0,1]$, transformed as $\eta = \Phi^{-1}(\tau) \sim \N(0,1)$. As \cite{Smith2025} have demonstrated the superiority of the type I cut posterior with correctly specified marginals for this example, we focus on comparing the OCCPs with $\alpha \in \{0.1, 0.25, 0.5, 0.99\}$.

\begin{table}[tb!]
\centering
\small
\begin{tabular}{l|lll|ll}
\hline
& Bias & RMSE & Coverage & \multicolumn{2}{c}{Coverage} \\
$\alpha$ & $\eta$ & $\eta$ & $\eta$ & $f_1$ & $f_2$ \\
\hline
0.1 & 0.0232 & 0.0454 & 0.82 & 0.517 & 0.536 \\
0.25 & 0.0236 & 0.0455 & 0.80 & 0.509 & 0.527 \\
0.5 & 0.0237 & 0.0455 & 0.80 & 0.500 & 0.516 \\
0.999 & 0.0243 & 0.0458 & 0.80 & 0.498 & 0.523 \\
\hline
\end{tabular}
\caption{Simulation. Bias, RMSE and coverage probability of true $\eta$ (first 3 columns) and coverage probability of true marginal densities based on 95\% credible intervals (last 2 columns). \label{tab_copula_sim}}
\end{table}

First we assess the ability of the misspecified copula to capture correlation as measured by Kendall's $\tau$, by computing the bias and RMSE of $\mu_\eta$ from the true value of $\Phi^{-1}(0.7)$, and the coverage probability based on 95\% credible intervals, averaged over 60 simulations. From Table \ref{tab_copula_sim}, the bias and RMSE decrease steadily, while the coverage probability increases as $\alpha$ decreases. The coverage probability of the true marginal density, averaged over all bins and simulated data sets also increases as $\alpha$ decreases, with $\alpha=0.1$ having the highest coverage. Figure \ref{Fig4} shows the discretized data and estimated marginal densities of a simulated data set for $\alpha=0.1$.

\begin{figure}[tb!]
\centering
{\includegraphics[width=\textwidth]{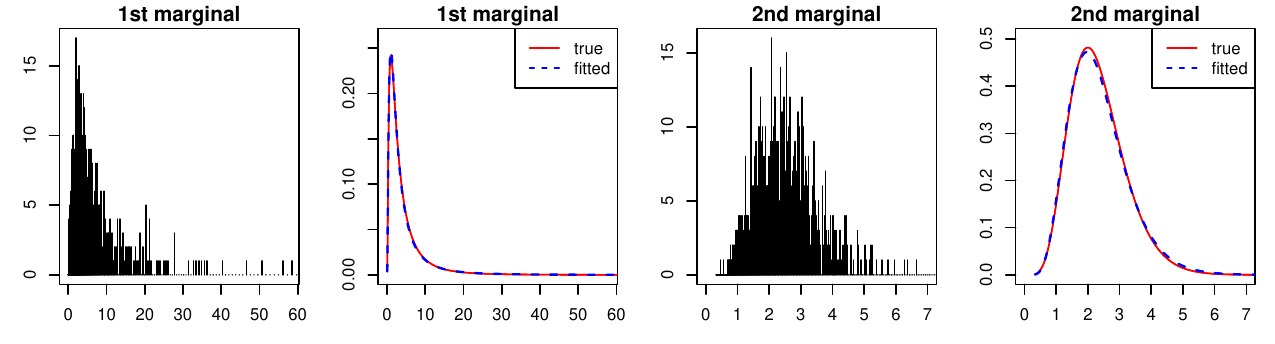}}
\caption{\label{Fig4}  Discretization of marginal observations and the true and fitted marginal densities.}
\end{figure}

\subsection{Gaussian copula for macroeconomic time series}
Next, we study a U.S. macroeconomic time series \citep{Smith2016} with $m=4$ variables (output growth, inflation, unemployment rate and interest rate), observed at $n=274$ quarters, with $N = 1096$ observations. A Gaussian copula is fitted, where $c(\bfu \mid \eta) = |\Omega|^{-1/2} \exp \{ -\bfw^\top (\Omega^{-1} - I_N) \bfw /2\}$, $\bfw = \Phi^{-1}(\bfu)$ and $\Omega \in \mathbb{R}^{N \times N}$ is a correlation matrix. Assuming $\bfw = \{\bfw_s\}_{s=1}^n$ follows a stationary autoregression with order $o=4$, $\Omega$ has a sparse block Toepltiz structure. For efficiency, $\Omega$ is parametrized in terms of its partial correlations $\{ \phi_{ij} \mid i,j=1, \dots, N, i> j \}$ through a vine representation of the copula. The copula parameter $\eta$ consists of 70 unique partial correlations which are transformed onto the real line via $\tilde{\phi}_{ij} = \Phi^{-1} [(\phi_{ij}+1)/2]$, with the prior, $\tilde{\phi}_{ij} \sim \N(0, v_\phi^0)$. As \cite{Smith2025} have demonstrated that the type I cut posterior produces better forecasts than the conventional posterior by using skew-$t$ marginals for this data, we focus on comparing the OCCPs for $\alpha \in \{0.1, 0.25, 0.5, 0.999\}$. Figure 5 shows the discretized data set, fitted marginal densities for $\alpha=0.1$ and 95\% credible intervals computed using 1000 simulations. 
\begin{figure}[tb!]
\centering
{\includegraphics[width=\textwidth]{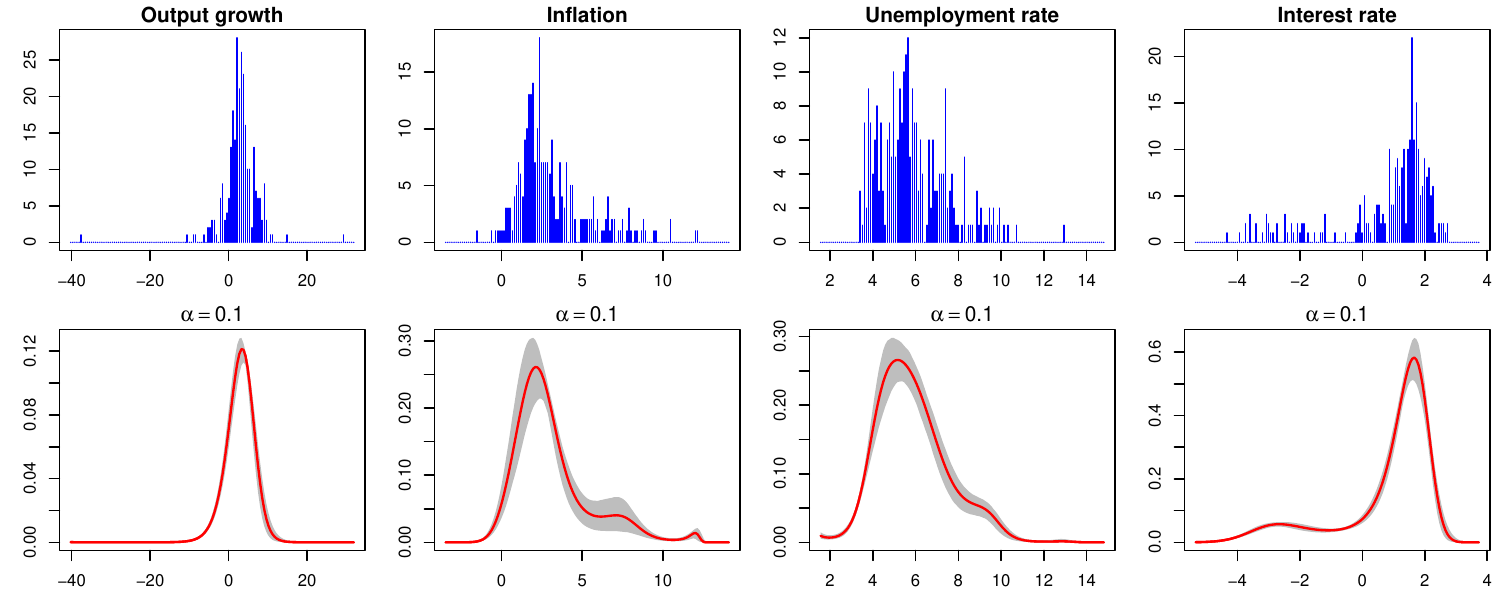}}
\caption{\label{Fig5} Discretization of observations and fitted marginal densities. 95\% credible intervals are shaded in grey.}
\end{figure}

Let $\pi_t(\theta)$ denote the OCCP based on $\{\bfy_s\}_{s=1}^t$. To assess the accuracy of the OCCP, we consider the $h$-steps ahead posterior predictive density,  
\begin{align}\label{pred_density}
f_{t+h \mid t} (\bfy_{t+h} \mid \bfy_t, \dots, \bfy_{t-o+1}) = \int p(\bfy_{t+h} \mid \bfy_t, \dots, \bfy_{t-o+1}, \theta) \pi_t(\theta) d\theta,
\end{align}
for $h=1, \dots, 8$. 5000 draws are generated from \eqref{pred_density} by sampling $\theta \sim \pi_t(\theta)$, and then sampling from $p(\bfy_{t+h} \mid \bfy_t, \dots, \bfy_{t-o+1}, \theta)$ \citep{Smith2016}. A metric $\log \widehat{f} _{t+h \mid t} (y_{j, t+h} \mid \bfy_t, \dots, \bfy_{t-o+1})$ is then computed based on a kernel density estimate of the log of draws from \eqref{pred_density}, evaluated at the observed $y_{j, t+h}$ $\forall j$. Higher values of this metric indicate better posteriors for performing predictions. We compute the OCCP based on the first $t=266$ observations, reserving the last $8$ observations for prediction. This is a challenging task, especially for the first variable (output growth) as there are extreme shocks near the end. From Figure \ref{Fig6}, $\alpha=0.1$ produces the highest score almost uniformly across all variables and time steps, illustrating the robustness of the R\'{e}nyi divergence and advantages of $\alpha<1$ for forecasting.

\begin{figure}[tb!]
\centering
{\includegraphics[width=0.9\textwidth]{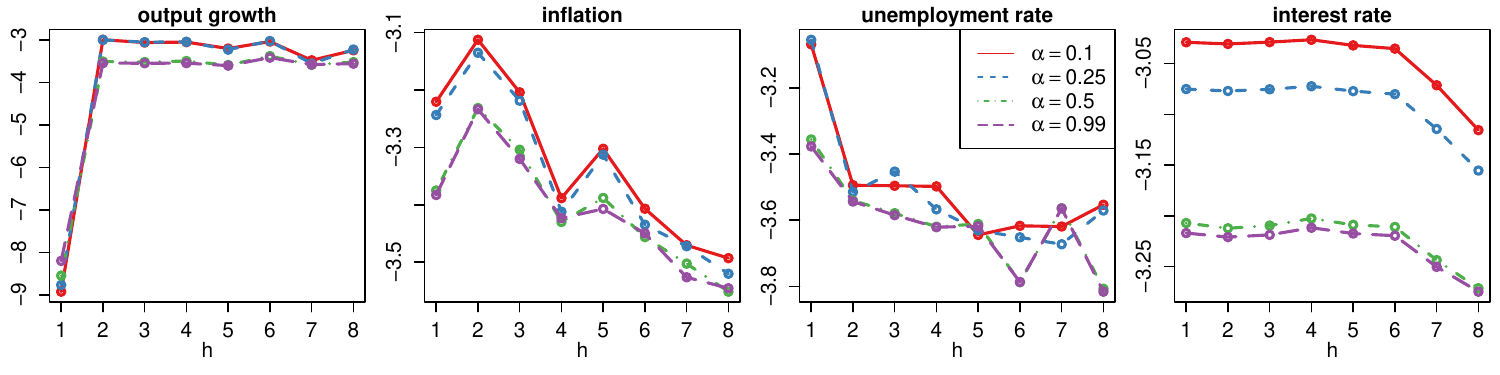}}
\caption{\label{Fig6} Log posterior predictive density estimate for $h=1, \dots, 8$ quarters ahead.}
\end{figure}

\section{Conclusion} \label{sec conclusion}
In this article, we propose an ``optimization-centric'' generalized Bayes approach to cutting feedback for semiparametric joint models, which addresses the risk of likelihood misspecification in parametric modules, and influential prior choices for high-dimensional parameters in nonparametric modules. This highly flexible framework allows possibly misspecified likelihoods to be replaced by general loss functions, more robust divergence measures for capturing prior-data conflicts, and posterior substitutes that can be computed more efficiently. We provide novel theory showing posterior concentration for generalized cut posteriors using PAC-Bayesian bounds, when there are infinite-dimensional parameters in the modules. Our approach is demonstrated using a benchmark example and challenging real applications involving removing hidden confounding in causal inference, and misspecified copula models with nonparametric marginals. Our studies indicate that R\'{e}nyi divergence with a small $\alpha$ is very effective in reducing bias, providing better coverage and improving predictions, relative to KLD. The introduction of learning rates is also essential in appropriate scaling of the prior-posterior penalty relative to the loss function, leading to better coverage.


\acks{Linda Tan's research was supported by the Ministry of Education, Singapore, under the Academic Research Fund Tier 2 (Award MOE-T2EP20222-0002). David Nott's research was supported by the Ministry of Education, Singapore, under the Academic Research Fund Tier 2 (MOE-T2EP20123-0009).}

\appendix

\section{Useful results on R\'{e}nyi divergence}\label{sec: renyi results}
In this appendix, we present a lemma, which contains useful results on evaluating the R\'{e}nyi divergence between commonly encountered variational densities and prior distributions.

\newpage
\begin{lemma} \label{lemS1}
\begin{enumerate}[(a)]
\item[]
\item Suppose $\theta =  (\theta_1^\top, \dots, \theta_n^\top)^\top$, $p(\theta) = \prod_{i=1}^n p_i(\theta_i)$ and $q_i(\theta) = \prod_{i=1}^n q(\theta_i)$. Then $\mathcal{D}_\alpha(q \| p) = \sum_{i=1}^n \mathcal{D}_\alpha(q_i \| p_i)$.

\item Suppose $\theta = (\theta_1^\top, \theta_2^\top)^\top$, $p(\theta) = p(\theta_1\mid \theta_2)p(\theta_2)$ and $q(\theta) = p(\theta_1\mid \theta_2) q(\theta_2)$. Then 
\begin{equation*}
\mathcal{D}_\alpha( q(\theta) \| p(\theta)) = \mathcal{D}_\alpha(q(\theta_2) \| p(\theta_2)).
\end{equation*}

\item Suppose $p(\theta)$ and $q(\theta)$ belong to the same exponential family with natural parameters $\lambda_0$ and $\lambda$ respectively, such that $p(\theta) = h(\theta) \exp\{\lambda_0^\top t(\theta) - A(\lambda_0)\}$ and $q(\theta) = h(\theta) \exp\{\lambda^\top t(\theta) - A(\lambda)\}$. Let  $\lambda' = \alpha \lambda + (1-\alpha) \lambda_0$. Then 
\[
\begin{aligned}
\mathcal{D}_\alpha(q\|p) &= \frac{A(\lambda') - \alpha A(\lambda) -  (1-\alpha) A(\lambda_0) }{\alpha-1}, 
\quad 
\nabla_\lambda \mathcal{D}_\alpha(q\|p) &= \frac{\alpha \{\nabla_\lambda A(\lambda') - \nabla_\lambda A(\lambda)\} }{\alpha-1}.
\end{aligned}
\]

\item Let $\theta$ be a $d$-dimensional random vector, $q(\theta)$ and $p(\theta)$ denote the densities of $\N(\mu, \Sigma)$ and $\N(\mu_0, \Sigma_0)$ respectively, and $W= \alpha \Sigma_0 + (1-\alpha) \Sigma$. Then 
\begin{small}
\begin{equation*}
\begin{aligned}
\mathcal{D}_\alpha(q\|p) = \begin{cases}
\dfrac{\alpha \log|\Sigma_0| }{2(\alpha-1) } - \dfrac{\log|\Sigma| }{2} + \dfrac{\log|W| }{2(1-\alpha) }+  \dfrac{\alpha (\mu - \mu_0)^\top W^{-1} (\mu - \mu_0)}{2} & \text{if $\alpha \in \mathbb{R}^+ \negthickspace\setminus \negthickspace \{1\}$}, \\
\left\{\log|\Sigma^{-1} \Sigma_0| + (\mu-\mu_0)^\top \Sigma_0^{-1} (\mu - \mu_0) + \tr(\Sigma_0^{-1}\Sigma) - d \right\} /2& \text{if $\alpha = 1$}. 
\end{cases}
\end{aligned}
\end{equation*}
\end{small}
In addition, if $CC^\top$ is a Cholesky decomposition of $\Sigma$, then $\forall \alpha > 0$,
\begin{align*}
\nabla_\mu D_\alpha (q \| p) &= \alpha W^{-1} (\mu - \mu_0), \\
\nabla_{\vec(\Sigma)} D_\alpha (q \| p) &= \vec \left\{ W^{-1} - \Sigma^{-1} + \alpha(\alpha -1) W^{-1}(\mu - \mu_0)(\mu - \mu_0)^\top W^{-1}   \right\}/2, \\
\nabla_{\vech(C)} D_\alpha (q \| p) 
&= 2 \vech( \nabla_{\vec(\Sigma)} D_\alpha (q \| p)  C)\\
&= \vech \left\{ W^{-1} C - C^{-\top} + \alpha (\alpha - 1) W^{-1}(\mu - \mu_0)(\mu - \mu_0)^\top W^{-1} C   \right\}.
\end{align*}

\item Let $q(\theta)$ and $p(\theta)$ denote the densities of $\text{IG}(a, b)$ and $\text{IG}(a_0, b_0)$, respectively. Then
\begin{align*}
& \mathcal{D}_\alpha(q \| p) = 
\begin{cases}
\frac{ \log \Gamma(a') - a' \log b' - \alpha\{ \log \Gamma(a) - a \log b\} }{\alpha-1} + \log \Gamma(a_0) - a_0 \log b_0 & \text{if $\alpha \in \mathbb{R}^+ \negthickspace\setminus \negthickspace \{1\}$}, \\
a_0\log \frac{b}{b_0} + \log \frac{\Gamma(a_0)}{\Gamma(a)} + (a - a_0) \psi(a) + a(b_0 - b)/b & \text{if $\alpha = 1$}. 
\end{cases} \\
& \nabla_a D_\alpha(q \| p) = 
\begin{cases}
\frac{ \alpha \{\log b - \log b' +  \psi(a') - \psi(a)\} }{\alpha-1} 
& \text{if $\alpha \in \mathbb{R}^+ \negthickspace\setminus \negthickspace \{1\}$}, \\
(a-a_0) \psi'(a) + b_0/b - 1  & \text{if $\alpha = 1$}.
\end{cases} \\
& \nabla_b D_\alpha(q \| p) = 
\begin{cases}
\frac{ \alpha \{a/b - a' /b'\} }{\alpha-1}
& \text{if $\alpha \in \mathbb{R}^+ \negthickspace\setminus \negthickspace \{1\}$}, \\
a(b - b_0)/b^2  & \text{if $\alpha = 1$}.
\end{cases}
\end{align*}
where $a' = \alpha a + (1-\alpha) a_0 $ and $b' = \alpha b +  (1-\alpha) b_0$.

\item Suppose $p(\theta) = \text{PG}(b, 0)$ and $q(\theta) = \text{PG}(b, c)$ are  P\'{o}lya-Gamma (PG) densities. Then 
\begin{align*}
\mathcal{D}_\alpha(q \| p) &= 
\begin{cases}
\frac{b}{\alpha-1} [\alpha \log\{ \cosh(c/2)\} - \log \{ \cosh(\sqrt{\alpha}c/2) \}] & \text{if $\alpha \in \mathbb{R}^+ \negthickspace\setminus \negthickspace \{1\}$}, \\
b\log \{ \cosh(c/2)\} - \frac{bc}{4} \tanh(c/2)
& \text{if $\alpha = 1$}. 
\end{cases} \\
\nabla_c D_\alpha (q \| p) &= 
\begin{cases}
\frac{b}{2(\alpha-1)}[\alpha \tanh(c/2) -  \sqrt{\alpha} \tanh(\sqrt{\alpha}c/2) ] 
& \text{if $\alpha \in \mathbb{R}^+ \negthickspace\setminus \negthickspace \{1\}$},  \\
\frac{b}{4} \tanh(c/2) -  \frac{bc}{8} \text{sech}^2 (c/2)  
& \text{if $\alpha = 1$}.
\end{cases}
\end{align*}
\end{enumerate}
\end{lemma}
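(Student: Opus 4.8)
The plan is to reduce every claim to evaluating the affinity integral $I_\alpha(q\|p) = \int q(\theta)^\alpha p(\theta)^{1-\alpha}\,\dt\theta$ in closed form, since $\mathcal{D}_\alpha(q\|p) = (\alpha-1)^{-1}\log I_\alpha(q\|p)$ and each gradient is then obtained by differentiating the resulting expression. For part 1, I would note that independence factorizes the integrand as $q^\alpha p^{1-\alpha} = \prod_i q_i^\alpha p_i^{1-\alpha}$, so $I_\alpha = \prod_i I_{\alpha,i}$ and the logarithm converts the product into the claimed sum. For part 2, the shared conditional factor survives to the first power, $q^\alpha p^{1-\alpha} = p(\theta_1\mid\theta_2)\,q(\theta_2)^\alpha p(\theta_2)^{1-\alpha}$; integrating out $\theta_1$ first and using $\int p(\theta_1\mid\theta_2)\,\dt\theta_1 = 1$ leaves exactly the affinity integral for the $\theta_2$-marginals.

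Part 3 is the engine for parts 5 and 6, so I would establish it next. Writing $\lambda' = \alpha\lambda + (1-\alpha)\lambda_0$, the integrand equals $h(\theta)\exp\{\lambda'^\top t(\theta) - \alpha A(\lambda) - (1-\alpha)A(\lambda_0)\}$, which is a density in the same family with natural parameter $\lambda'$ multiplied by the constant $\exp\{A(\lambda') - \alpha A(\lambda) - (1-\alpha)A(\lambda_0)\}$. Provided $\lambda'$ lies in the natural-parameter space, that density integrates to one, so $I_\alpha$ equals the constant and the stated formula follows; the gradient is immediate from $\nabla_\lambda\lambda' = \alpha I$, which attaches a factor $\alpha$ to $\nabla_\lambda A$ evaluated at $\lambda'$. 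Part 5 is then the specialization to the inverse-gamma family, whose natural parameters are affine in $(a,b)$ with log-partition $A(a,b) = \log\Gamma(a) - a\log b$; this delivers $a' = \alpha a + (1-\alpha)a_0$, $b' = \alpha b + (1-\alpha)b_0$, with the digamma terms in the gradients coming from $\nabla_a A$.

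I expect part 4 (Gaussian) to be the main obstacle, and I would compute it directly rather than only as a corollary of part 3. For $\alpha\neq1$ I would complete the square in the exponent of $q^\alpha p^{1-\alpha}$: the combined quadratic has precision $\alpha\Sigma^{-1} + (1-\alpha)\Sigma_0^{-1}$, and the residual constant left after completing the square, re-expressed through $W = \alpha\Sigma_0 + (1-\alpha)\Sigma$, produces the cross term $\tfrac{\alpha}{2}(\mu-\mu_0)^\top W^{-1}(\mu-\mu_0)$, while collecting the Gaussian normalizing constants yields the log-determinant terms; the $\alpha=1$ line is the usual Gaussian KLD recovered as $\alpha\to1$. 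The delicate part is the gradient bookkeeping: because $\partial W/\partial\Sigma = (1-\alpha)I$, differentiating $\tfrac{1}{2(1-\alpha)}\log|W|$ contributes $\tfrac12 W^{-1}$, while differentiating the quadratic form via $\partial W^{-1} = -W^{-1}(\partial W)W^{-1}$ generates the $\tfrac{\alpha(\alpha-1)}{2}W^{-1}(\mu-\mu_0)(\mu-\mu_0)^\top W^{-1}$ term; combined with $-\tfrac12\Sigma^{-1}$ from $\log|\Sigma|$ this gives the stated $\nabla_{\vec(\Sigma)}\mathcal{D}_\alpha$. The Cholesky gradient then follows from $\Sigma = CC^\top$ together with the identity $\nabla_{\vech(C)} f = 2\,\vech\{(\nabla_{\vec(\Sigma)}f)\,C\}$ and $\Sigma^{-1}C = C^{-\top}$.

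Finally, for part 6 (Polya-Gamma) I would use exponential tilting: $\mathrm{PG}(b,c)$ arises from $\mathrm{PG}(b,0)$ through $p_{\mathrm{PG}(b,c)}(\omega) = \cosh^{b}(c/2)\,e^{-c^2\omega/2}\,p_{\mathrm{PG}(b,0)}(\omega)$, the normalizer being fixed by the Laplace-transform identity $\E_{\mathrm{PG}(b,0)}[e^{-t^2\omega/2}] = \cosh^{-b}(t/2)$. Substituting into $q^\alpha p^{1-\alpha}$ cancels the base to the first power and leaves $\cosh^{\alpha b}(c/2)\,e^{-\alpha c^2\omega/2}\,p_{\mathrm{PG}(b,0)}(\omega)$; integrating and reusing the Laplace identity with $t = \sqrt{\alpha}\,c$ gives $I_\alpha = \cosh^{\alpha b}(c/2)\cosh^{-b}(\sqrt{\alpha}c/2)$, and hence the claimed divergence, while $\nabla_c$ follows from $\tfrac{\dt}{\dt c}\log\cosh(c/2) = \tfrac12\tanh(c/2)$. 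Every $\alpha=1$ expression in parts 3--6 I would recover as the $\alpha\to1$ limit by L'H\^opital's rule. The only steps I anticipate as nonroutine are the Gaussian gradient bookkeeping and verifying the Polya-Gamma tilting identity; once $I_\alpha$ is in hand for each family, the rest is direct computation.
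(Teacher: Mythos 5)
Your proposal is correct and follows essentially the same route as the paper's proof: evaluate the affinity integral $\int q(\theta)^\alpha p(\theta)^{1-\alpha}\,\dt\theta$ in closed form for each case (factorization over independent blocks, cancellation of the shared conditional factor, exponential-family tilting with $\lambda' = \alpha\lambda + (1-\alpha)\lambda_0$, and the P\'{o}lya-Gamma Laplace-transform identity), then differentiate the resulting expressions. The only minor deviations are that the paper derives the Gaussian case by specializing the exponential-family formula of part 3 before carrying out the same $W$-algebra you describe, and it computes the $\alpha=1$ entries directly as KL expectations rather than as $\alpha\to1$ limits via L'H\^{o}pital — your limit argument is legitimate here given the continuity of R\'{e}nyi divergence at $\alpha=1$, which the paper itself invokes when defining $\mathcal{D}_1$.
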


\begin{proof}
\begin{enumerate}[(a)]
\item 
\begin{equation*}
\begin{aligned}
D_\alpha(q \| p) 
&= \frac{1}{\alpha-1} \log \left( \prod_{i=1}^n \int q_i(\theta_i)^\alpha p_i(\theta_i)^{1-\alpha} d\theta_i \right)\\
&= \frac{1}{\alpha-1} \sum_{i=1}^n\log \left(  \int q_i(\theta_i)^\alpha p_i(\theta_i)^{1-\alpha} d\theta_i \right) = \sum_{i=1}^n D_\alpha(q_i \| p_i).
\end{aligned}
\end{equation*}

\item 
\begin{equation*}
\begin{aligned}
D_\alpha(q \| p) 
&= \frac{1}{\alpha-1} \log \int \int \{ p(\theta_1 \mid \theta_2) q(\theta_2) \}^\alpha \{ p(\theta_1 \mid \theta_2) p(\theta_2) \}^{1-\alpha}  d\theta_1 \theta_2 \\
&=  \frac{1}{\alpha-1} \log \int \left( \int p(\theta_1 \mid \theta_2) d\theta_1 \right) q(\theta_2)^\alpha p(\theta_2)^{1-\alpha}  \theta_2 \\
&=  \frac{1}{\alpha-1} \log  \int  q(\theta_2)^\alpha p(\theta_2)^{1-\alpha}  \theta_2 
= D_\alpha(q(\theta_2) \| p(\theta_2)).
\end{aligned}
\end{equation*}

\item Since $\int h(\theta) \exp \{ \lambda'^\top t(\theta) - A(\lambda')\} d\theta = 1$, 
\begin{align*}
D_\alpha(q \| p) 
&= \frac{1}{\alpha-1} \log \left( \int  h(\theta) \exp \{ \lambda'^\top t(\theta) - \alpha A(\lambda) -  (1-\alpha) A(\lambda_0) \} d\theta \right)\\
&= \frac{A(\lambda') - \alpha A(\lambda) -  (1-\alpha) A(\lambda_0) }{\alpha-1}.
\end{align*}

\item From (c), $A(\lambda) = (\mu^\top \Sigma^{-1} \mu + \log |\Sigma|)/2$ and $\lambda = ((\Sigma^{-1} \mu)^\top, -\vec(\Sigma^{-1})^\top/2)^\top$ for a Gaussian density. Let $V =\{  \alpha\Sigma^{-1} + (1-\alpha) \Sigma_0^{-1} \}^{-1} = \Sigma W^{-1} \Sigma_0 = \Sigma_0 W^{-1} \Sigma$ and $m =V\{ \alpha\Sigma^{-1} \mu + (1-\alpha) \Sigma_0^{-1} \mu_0 \}$. Then 
\begin{align*}
& D_\alpha (q \| p) 
= \frac{m^\top V^{-1} m + \log|V| - \alpha(\mu^\top \Sigma^{-1} \mu + \log|\Sigma|) -  (1 - \alpha)( \mu_0^\top \Sigma_0^{-1} \mu_0 + \log |\Sigma_0| ) }{2(\alpha-1) } \\
& =  \frac{\mu_0^\top \Sigma_0^{-1} \mu_0 + \log |\Sigma_0| }{2} + \frac{\alpha (\mu^\top \Sigma^{-1} \mu + \log|\Sigma|) }{2(1-\alpha)}  + \frac{\log|\Sigma| + \log|\Sigma_0| - \log|W| }{2(\alpha-1) }\\
& \quad - \frac{\alpha^2 \mu^\top \Sigma^{-1} V\Sigma^{-1} \mu + 2 \alpha (1-\alpha) \mu^\top \Sigma^{-1} V  \Sigma_0^{-1} \mu_0 + (1-\alpha)^2 \mu_0^\top \Sigma_0^{-1} V  \Sigma_0^{-1} \mu_0  }{2(1-\alpha)} \\
& = \frac{\alpha \log|\Sigma_0| }{2(\alpha-1) } - \frac{\log|\Sigma| }{2} + \frac{\log|W| }{2(1-\alpha) }+  \frac{(1-\alpha)\mu_0^\top \Sigma_0^{-1} V(V^{-1} \Sigma_0 - (1-\alpha)I) \Sigma_0^{-1} \mu_0 }{2(1-\alpha)}  \\
& \quad + \frac{\alpha \mu^\top \Sigma^{-1} V (V^{-1} \Sigma - \alpha I) \Sigma^{-1} \mu - 2 \alpha (1-\alpha) \mu^\top \Sigma^{-1} V  \Sigma_0^{-1} \mu_0}{2(1-\alpha) } \\
& = \frac{\alpha \log|\Sigma_0| }{2(\alpha-1) } - \frac{\log|\Sigma| }{2} + \frac{\log|W| }{2(1-\alpha) }+  \frac{ \alpha (\mu_0^\top \Sigma_0^{-1} V \Sigma^{-1} \mu_0 + \mu^\top \Sigma^{-1} V \Sigma_0^{-1} \mu - 2 \mu \Sigma^{-1} V  \Sigma_0^{-1} \mu_0)}{2}  \\
& = \frac{\alpha\log|\Sigma_0| }{2(\alpha-1) } - \frac{\log|\Sigma| }{2} + \frac{\log|W| }{2(1-\alpha) }+  \frac{ \alpha (\mu_0^\top W^{-1} \mu_0 + \mu^\top W^{-1} \mu - 2 \mu^\top W^{-1} \mu_0)}{2}  \\
& = \frac{\alpha\log|\Sigma_0| }{2(\alpha-1) } - \frac{\log|\Sigma| }{2} + \frac{\log|W| }{2(1-\alpha) }+  \frac{\alpha (\mu - \mu_0)^\top W^{-1} (\mu - \mu_0)}{2}.
\end{align*}
\begin{align*}
\KL(q \| p) &= \E_q \{ \log q(\theta) - \log p(\theta) \} \\
&= \frac{1}{2} \E_q \{ - \log|\Sigma| - (\theta - \mu)^\top \Sigma^{-1} (\theta - \mu) + \log |\Sigma_0| +  (\theta - \mu_0)^\top \Sigma_0^{-1} (\theta - \mu_0) \} \\
&= \frac{1}{2} \{  \log |\Sigma_0| - \log|\Sigma| - d +  (\mu - \mu_0)^\top \Sigma_0^{-1} (\mu - \mu_0) + \tr(\Sigma_0^{-1} \Sigma)   \}.
\end{align*}
The derivatives can be obtained by applying vector differential calculus \cite{Magnus1999}.

\item From (c), $A(\lambda) = \log \Gamma(a) - a \log b$ where $\lambda = (a, b)^\top$ for $\text{IG}(a, b)$. Thus 
\begin{align*}
D_\alpha(q \| p) &= \frac{\log \Gamma(a') - a' \log b' - \alpha(\log \Gamma(a) - a \log b) - (1-\alpha) \log \Gamma(a_0) - a_0 \log b_0  }{\alpha-1}, 
\end{align*}
\begin{align*}
\KL(q \| p) &= \E_q\{ a\log b - \log \Gamma(a) - (a+1) \log \theta - b/\theta\} \\
& \quad  - \E_q \{a_0\log b_0 - \log \Gamma(a_0) - (a_0+1) \log \theta - b_0/\theta \} \\
&= a\log b- a_0\log b_0 + \log \frac{\Gamma(a_0)}{\Gamma(a)} + (a_0 - a) \{ \log b - \psi(a) \} + \frac{a(b_0 - b)}{b}.
\end{align*}

\item The density function of a $PG(b, c)$ random variable is 
\[
\text{PG}(\theta \mid b, c) = \cosh^b(c/2) \exp(-c^2 \theta /2) \text{PG}(\theta \mid b, 0).
\]
If $\alpha \neq 1$, then 
\begin{align*}
D_\alpha(q \| p) &= \frac{1}{\alpha-1} \log \int \text{PG}(\theta \mid b, 0) \left(\frac{\text{PG}(\theta \mid b, c)}{\text{PG}(\theta \mid b, 0)} \right)^\alpha d\theta \\
&= \frac{1}{\alpha-1} \log \left\{  \cosh^{b \alpha}(c/2)\int \text{PG}(\theta \mid b, 0) \exp(-c^2\alpha \theta/2) d\theta \right\}\\
&= \frac{1}{\alpha-1} \log \left\{  \cosh^{b \alpha}(c/2) \cosh^{-b} (\sqrt{\alpha}c/2) \right\}\\
&= \frac{b}{\alpha-1}[\alpha \log\{ \cosh(c/2)\} - \log \{ \cosh(\sqrt{\alpha}c/2) \}]. 
\end{align*}
If $\alpha=1$, 
\begin{align*}
D_\alpha(q \| p) &= \E_q \{  \log \text{PG}(\theta \mid b, c) - \log \text{PG}(\theta \mid b, 0) \} \\
 &= \E_q  [ b\log \{ \cosh(c/2)\} - c^2 \theta/2 ] \\
 &= b\log \{ \cosh(c/2)\} - \frac{bc}{4} \tanh(c/2).
\end{align*}
\end{enumerate}
\end{proof}

\section{Proofs and assumptions verification}\label{sec proofs and verif}
\subsection{Proofs of main results}
This section contains the proofs of all results stated in the paper.

\begin{proof}[Lemma \ref{lem:PAC-Bayes1}]
	The start of the proof follows the usual PAC-Bayes recipe. From Assumption \ref{ass:loss_moments}(i), we have 
	\begin{flalign*}
		1\ge \E\left\{\E_{\varphi\sim p(\varphi)}\left[e^{\lambda_1\left\{\Ln(\varphi,\varphi^\star)-\Ln_{n_1}(\varphi,\varphi^\star)\right\}-{\lambda_1^2C_1^2/{n_1}}}\right]\right\}.
	\end{flalign*}Applying the Donsker and Varadhan Lemma to the RHS of the above yields
	\begin{flalign*}
		1\ge \E e^{\sup_{q\in\mathcal{M}(\Phi)}\left\{{\lambda_1\left\{\Ln(\varphi,\varphi^\star)-\Ln_{n_1}(\varphi,\varphi^\star)\right\}-{\lambda_1^2C_1^2/n_1}}-\KL(q\|p)\right\}}.	
	\end{flalign*}Applying Jensen's inequality, taking logarithms, and dividing both sides by $\lambda_1$, we have
	\begin{flalign*}
		0\ge \E\sup_{q\in\mathcal{M}(\Phi)} \left\{{\left[\Ln(\varphi,\varphi^\star)-\Ln_{n_1}(\varphi,\varphi^\star)\right]-{\lambda_1^2C_1^2/{n_1}}}-\frac{\KL(q\|p)}{\lambda_1}\right\}.	
	\end{flalign*}Since the above is true for all ${q\in\mathcal{M}(\Phi)}$, it is true for $\wh{q}(\varphi):=p_\cut^\star(\varphi\mid z)$. Hence, we can re-arrange the above as
	\begin{flalign*}
		\E\int \Ln(\varphi,\varphi^\star)\dt\wh{q}(\varphi)	\le \E\left\{\int \Ln_{n_1}(\varphi,\varphi^\star)\dt\wh{q}(\varphi)+\frac{\lambda_1C_1^2}{n_1}+\frac{\KL(\wh{q}\|p)}{\lambda_1}\right\}.	
	\end{flalign*}	
	We now break the result up into cases depending on the value of $\alpha$.\\
	
	\noindent\textbf{Case 1: $\alpha\ge1$.} Since the R\'{e}nyi-divergence is increasing on $\alpha>0$, for $\alpha>1$ we have that, for all $\lambda_1>0$, 
	\begin{flalign*}
		\E\int \Ln(\varphi,\varphi^\star)\dt\wh{q}(\varphi)	\le& \E\left\{\int \Ln_{n_1}(\varphi,\varphi^\star)\dt\wh{q}(\varphi)+\frac{\lambda_1C_1^2}{n_1}+\frac{\KL(\wh{q}\|p)}{\lambda_1}\right\}	
		\\\le& 	\E\left\{\int \Ln_{n_1}(\varphi,\varphi^\star)\dt\wh{q}(\varphi)+\frac{\lambda_1C_1^2}{n_1}+\frac{\mathcal{D}_{\alpha}(\wh{q}\|p)}{\lambda_1}\right\}\\=&\E\inf_{q\in\mathcal{F}_\varphi}\left\{\int \Ln_{n_1}(\varphi,\varphi^\star)\dt {q}(\varphi)+\frac{\lambda_1C_1^2}{n_1}+\frac{\mathcal{D}_{\alpha}({q}\|p)}{\lambda_1}\right\},
	\end{flalign*}	where the last equation follows from the definition of $\wh{q}(\varphi)=p_\cut(\varphi\mid z)$ in equation \eqref{cut_varphi}. If $\alpha\rightarrow1$, then $\mathcal{D}_\alpha\rightarrow\KL$ and the result follows as described. 
	
	\noindent\textbf{Case 2: $0<\alpha<1$.} Consider that
	\begin{flalign*}
		\E\int \Ln(\varphi,\varphi^\star)\dt\wh{q}(\varphi)	\le& \E\left\{\int \Ln_{n_1}(\varphi,\varphi^\star)\dt\wh{q}(\varphi)+\frac{\lambda_1C_1^2}{n_1}+\frac{\KL(\wh{q}\|p)}{\lambda_1}\right\}\\=&
		\E\left\{\int \Ln_{n_1}(\varphi,\varphi^\star)\dt\wh{q}(\varphi)+\frac{\lambda_1C_1^2}{n_1}+\frac{\mathcal{D}_{\alpha}(\wh{q}\|p)}{\lambda_1}\frac{\KL(\wh{q}\|p)}{\mathcal{D}_{\alpha}(\wh{q}\|p)}\right\}.
	\end{flalign*}Since the R\'{e}nyi-divergence is increasing on $\alpha>0$, for $\alpha<1$ we have that, $\KL(q\|p)/\mathcal{D}_\alpha(q\|p)\le1$ for all $q,p\in\mathcal{F}_\varphi$. Taking $\wh\lambda_1\ge\KL(\wh q\|p)/\mathcal{D}_\alpha(\wh q\|p)/\lambda_1$, conclude that 
	\begin{flalign*}
		\E\int \Ln(\varphi,\varphi^\star)\dt\wh{q}(\varphi)	\le&
		\E\left\{\int \Ln_{n_1}(\varphi,\varphi^\star)\dt\wh{q}(\varphi)+\frac{\lambda_1C_1^2}{n_1}+\lambda_1\mathcal{D}_{\alpha}(\wh{q}\|p)\frac{\KL(\wh{q}\|p)}{\mathcal{D}_{\alpha}(\wh{q}\|p)}\right\}\\=&	\E\left\{\int \Ln_{n_1}(\varphi,\varphi^\star)\dt\wh{q}(\varphi)+\frac{\lambda_1C_1^2}{n_1}+\wh\lambda_1\mathcal{D}_{\alpha}(\wh{q}\|p)\right\}\\=&	\E\inf_{q\in\mathcal{F}_\varphi}\left\{\int \Ln_{n_1}(\varphi,\varphi^\star)\dt{q}(\varphi)+\wh\lambda_1\mathcal{D}_{\alpha}({q}\|p)\right\}+\frac{\lambda_1C_1^2}{n_1}.
	\end{flalign*}
	%
\end{proof}

\begin{proof}[Lemma \ref{lem:PAC-Bayes2}]
	By Assumption \ref{ass:loss_moments}(ii), we have 
	\begin{flalign}\label{eq:new_eq1}
		1\ge \E\left\{ \E_{\varphi\sim \wh{q}(\varphi)}\E_{\eta\mid\varphi\sim p(\eta\mid\varphi)}\left[e^{\lambda_2\left\{\Mn(\theta,\theta^\star)-\Mn_{n_2}(\theta,\theta^\star)\right\}-{\lambda_2^2C_2^2/{n_2}}}\right]\right\}.
	\end{flalign}Applying  Donsker-Varadhan with $h_{n_2}(\theta,\theta^\star;\lambda_2)={\lambda_2\left\{\Mn(\theta,\theta^\star)-\Mn_{n_2}(\theta,\theta^\star)\right\}-{\lambda_2^2C_2^2/{n_2}}}$, we have 
	$$
	\log \int e^{h_{n_2}(\theta,\theta^\star;\lambda_2)}\dt p(\eta\mid\varphi)\wh{q}(\varphi)=\sup_{q\in\mathcal{M}(\Phi\times\mathcal{E})}\left\{\int e^{h_{n_2}(\theta,\theta^\star;\lambda_2)}\dt q(\eta,\varphi)-\KL\{q(\eta,\varphi)\|p(\eta\mid\varphi)\wh{q}(\varphi)\}\right\}.
	$$
	Applying the above to the RHS of \eqref{eq:new_eq1} yields
	\begin{flalign*}
		1\ge \E e^{\sup_{q\in\mathcal{M}(\Phi\times\mathcal{E})}\left\{{\lambda_2\left\{\Mn(\theta,\theta^\star)-\Mn_{n_2}(\theta,\theta^\star)\right\}-{\lambda_2^2C_2^2/{n_2}}}-\KL\{q(\eta,\varphi)\|p(\eta\mid\varphi)\wh{q}(\varphi)\}\right\}}.	
	\end{flalign*}Applying Jensen's inequality, taking logarithms, and dividing both sides by $\lambda_2$, we have
	\begin{flalign*}
		0\ge \E{\sup_{q\in\mathcal{M}(\Phi\times\mathcal{E})}\left\{{\left[\Mn(\theta,\theta^\star)-\Mn_{n_2}(\theta,\theta^\star)\right]-{\lambda_2C_2^2/{n_2}}}-\frac{1}{\lambda_2}\KL\{q(\eta,\varphi)\|p(\eta\mid\varphi)\wh{q}(\varphi)\}\right\}}.		
	\end{flalign*}Since the above is true for all $q\in\mathcal{M}(\Phi\times\mathcal{E})$, it is true for 
	$$
	\wh{q}(\theta)=\wh{q}(\eta\mid \varphi)\wh{q}(\varphi):=p^\star_\cut(\eta\mid w,\varphi)p^\star_\cut(\varphi\mid z).$$
	Hence, we can re-arrange the above to obtain
	\begin{flalign}
		&\E\int \Mn(\theta,\theta^\star)\dt \wh{q}(\theta)	\nonumber\\\le &\E\left\{\int \Mn_{n_2}(\theta,\theta^\star)\dt \wh{q}(\theta)+\frac{\lambda_2C_2^2}{n_2}+\frac{1}{\lambda_2}\KL\{\wh{q}(\eta\mid\varphi)\wh{q}(\varphi)\|p(\eta\mid\varphi)\wh{q}(\varphi)\}\right\}\nonumber\\\le &\E\left\{\int \Mn_{n_2}(\theta,\theta^\star)\dt \wh{q}(\theta)+\frac{\lambda_2C_2^2}{n_2}+\frac{1}{\lambda_2}\int \KL\{\wh{q}(\eta\mid\varphi)\|p(\eta\mid\varphi)\}\dt \wh{q}(\varphi)\right\}.		\label{eq:new_eq2}
	\end{flalign}		
	We again break up the proof into the cases $\alpha\ge1$ and $0<\alpha<1$.\\ 
	
	\noindent\textbf{Case 1: $\alpha\ge1$.} Consider first the case where  $\alpha\rightarrow1$, and that, by the definition of $\wh{q}(\eta\mid\varphi)=p^\star_\cut(\eta\mid w,\varphi)$, the RHS of \eqref{eq:new_eq2} is minimized by setting $q(\eta\mid\varphi)=\wh{q}(\eta\mid\varphi)$ at any value of $\varphi$, see Lemma \ref{lem:cutpost}  for a proof of this claim. Hence, the RHS of \eqref{eq:new_eq2} is equivalently stated as
	\begin{flalign*}
		&\E\int \Mn(\theta,\theta^\star)\dt \wh{q}(\theta)	\nonumber \\\le &\E\left\{\int \Mn_{n_2}(\theta,\theta^\star)\dt \wh{q}(\eta\mid\varphi)\wh{q}(\varphi)+\frac{1}{\lambda_2}\int \KL\{\wh{q}(\eta\mid\varphi)\|p(\eta\mid\varphi)\}\dt \wh{q}(\varphi)\right\}+\frac{\lambda_2C_2^2}{n_2}\\=&\E \inf_{q\in\mathcal{F}_{\eta\mid\varphi}}\left[\E_{\wh{q}(\varphi)}\E_{q(\eta\mid\varphi)}\left\{ \Mn_{n_2}(\theta,\theta^\star)\right\}+\frac{1}{\lambda_2}\E_{\wh{q}(\varphi)}\KL\{{q}(\eta\mid\varphi)\|p(\eta\mid\varphi)\}\right]+\frac{\lambda_2C_2^2}{n_2},
	\end{flalign*}	which yields the stated result for KL divergence. 
	
	When $\alpha>1$, use the fact that $\mathcal{D}_\alpha(q\|p)\ge \KL(q\|p)$ for each $q,p\in\mathcal{F}_{\eta\mid\varphi}$ to obtain
	\begin{flalign*}
		&\E\int \Mn(\theta,\theta^\star)\dt \wh{q}(\theta)	\nonumber \\\le &\E\left\{\int \Mn_{n_2}(\theta,\theta^\star)\dt \wh{q}(\eta\mid\varphi)\wh{q}(\varphi)+\frac{1}{\lambda_2}\int \KL\{\wh{q}(\eta\mid\varphi)\|p(\eta\mid\varphi)\}\dt \wh{q}(\varphi)\right\}+\frac{\lambda_2C_2^2}{n_2}\\\le &\E\left\{\int \Mn_{n_2}(\theta,\theta^\star)\dt \wh{q}(\eta\mid\varphi)\wh{q}(\varphi)+\frac{1}{\lambda_2}\int \mathcal{D}_\alpha\{\wh{q}(\eta\mid\varphi)\|p(\eta\mid\varphi)\}\dt \wh{q}(\varphi)\right\}+\frac{\lambda_2C_2^2}{n_2} \\=&\E \inf_{q\in\mathcal{F}_{\eta\mid\varphi}}\left[\E_{\wh{q}(\varphi)}\E_{q(\eta\mid\varphi)}\left\{ \Mn_{n_2}(\theta,\theta^\star)\right\}+\frac{1}{\lambda_2}\E_{\wh{q}(\varphi)}\mathcal{D}_\alpha\{{q}(\eta\mid\varphi)\|p(\eta\mid\varphi)\}\right]+\frac{\lambda_2C_2^2}{n_2},
	\end{flalign*}	where the last line uses the definition of $\wh{q}(\eta\mid\varphi)$  in \eqref{cut_varphi} .
	
	\noindent\textbf{Case 2: $0<\alpha<1$.}
	Similar to the argument in Lemma \ref{lem:PAC-Bayes1}, since the R\'{e}nyi-divergence is increasing on $\alpha>0$, for $\alpha<1$, we have $\KL(q\|p)/\mathcal{D}_\alpha(q\|p)\le1$ for all $q,p\in\mathcal{F}_{\eta\mid\varphi}$. Hence, for $\wh\lambda_2\ge(1/\lambda_2)\KL(\wh q\|p)/\mathcal{D}_\alpha(\wh q\|p)$, we can conclude that 
	\begin{flalign*}
		&\E\int \Mn(\theta,\theta^\star)\dt \wh{q}(\theta)	\nonumber \\\le &\E\left\{\int \Mn_{n_2}(\theta,\theta^\star)\dt \wh{q}(\eta\mid\varphi)\wh{q}(\varphi)+\frac{1}{\lambda_2}\int \KL\{\wh{q}(\eta\mid\varphi)\|p(\eta\mid\varphi)\}\dt \wh{q}(\varphi)\right\}+\frac{\lambda_2C_2^2}{n_2}\\\le &\E\left\{\int \Mn_{n_2}(\theta,\theta^\star)\dt \wh{q}(\eta\mid\varphi)\wh{q}(\varphi)+\wh\lambda_2\int \mathcal{D}_\alpha\{\wh{q}(\eta\mid\varphi)\|p(\eta\mid\varphi)\}\dt \wh{q}(\varphi)\right\}+\frac{\lambda_2C_2^2}{n_2} \\=&\E \inf_{q\in\mathcal{F}_{\eta\mid\varphi}}\left[\E_{\wh{q}(\varphi)}\E_{q(\eta\mid\varphi)}\left\{ \Mn_{n_2}(\theta,\theta^\star)\right\}+\wh\lambda_2\E_{\wh{q}(\varphi)}\mathcal{D}_\alpha\{{q}(\eta\mid\varphi)\|p(\eta\mid\varphi)\}\right]+\frac{\lambda_2C_2^2}{n_2},
	\end{flalign*} where the last line again uses the definition of $\wh{q}(\eta\mid\varphi)$  in \eqref{cut_varphi}.
	
\end{proof}

\begin{proof}[Corollary \ref{cor:cut_post1}]
	Setting $\wh{q}(\varphi):=p^\star_\cut(\varphi\mid z)$, and applying Markov's inequality 
	$$
	\E \wh{Q}\left[\left\{\varphi\in\Phi: \Ln(\varphi,\varphi^\star)\ge M\epsilon_{1,n}\right\}\right]\le \E  \frac{1}{M\epsilon_{1,n}}\int\Ln(\varphi,\varphi^\star)\dt\wh{q}(\varphi). 
	$$
	Applying Lemma \ref{lem:PAC-Bayes1} to the RHS of the above then yields
	\begin{flalign}
		&\E \wh{Q}\left[\left\{\varphi\in\Phi: \Ln(\varphi,\varphi^\star)\ge M\epsilon_{1,n}\right\}\right]\nonumber\\&\le	\frac{1}{M\epsilon_{1,n}}\mathbb{E} \inf _{q \in \mathcal{F}_{\varphi}}\left\{\int \Ln_n\left(\varphi, \varphi^{\star}\right) \mathrm{d} q(\varphi)+\frac{\lambda_1C_1^2}{n}+\frac{ \mathcal{D}_\alpha(q \| p)}{\lambda_1}\right\}.\label{eq:post_conc_varphi}
	\end{flalign}
	
	\noindent\textbf{Case 1: $\alpha\ge1$.}
	Since \eqref{eq:post_conc_varphi}  is true for the infimum, we have, for non-random $\rho_{1,n}$ as in Assumption \ref{ass:prior_varphi}, 
	\begin{flalign*}
		&\E \wh{Q}\left[\left\{\varphi\in\Phi: \Ln(\varphi,\varphi^\star)\ge M\epsilon_{1,n}\right\}\right]\\&\le	\frac{1}{M\epsilon_{1,n}}\mathbb{E} \inf _{q \in \mathcal{F}_{\varphi}}\left\{\int \Ln_n\left(\varphi, \varphi^{\star}\right) \mathrm{d} q(\varphi)+\frac{\lambda_1C_1^2}{n}+\frac{\mathcal{D}_\alpha(q \| p)}{\lambda_1 }\right\}\\&\le\frac{1}{M\epsilon_{1,n}} \E \left\{\int \Ln_n\left(\varphi, \varphi^{\star}\right) \mathrm{d} \rho_{1,n}(\varphi)+\frac{\lambda_1C_1^2}{n}+\frac{\mathcal{D}_\alpha(\rho_{1,n} \| p)}{\lambda_1 }\right\}\\&\le\frac{1}{M\epsilon_{1,n}}\left\{ \epsilon_{1,n}+\frac{\lambda_1C_1^2}{n}+\lambda_1\epsilon_{1,n}/\lambda_1\right\}\\&\lesssim 1/M.
	\end{flalign*}

	\noindent\textbf{Case 2: $0<\alpha<1$.} In this case, \eqref{eq:post_conc_varphi} must be restated as 
	\begin{flalign*}
		&\E \wh{Q}\left[\left\{\varphi\in\Phi: \Ln(\varphi,\varphi^\star)\ge M\epsilon_{1,n}\right\}\right]\nonumber\\&\le	\frac{1}{M\epsilon_{1,n}}\mathbb{E} \inf _{q \in \mathcal{F}_{\varphi}}\left\{\int \Ln_n\left(\varphi, \varphi^{\star}\right) \mathrm{d} q(\varphi)+\frac{\lambda_1C_1^2}{n}+\wh{\lambda}_1 \mathcal{D}_\alpha(q \| p)\right\},
	\end{flalign*}where $\wh\lambda_1=\lambda_1^{-1}\KL\{\wh{q}\|p\}/\mathcal{D}_\alpha\{\wh{q}\|p\}$. Again choose $q(\varphi)=\rho_{1,n}(\varphi)$ we see that 
	\begin{flalign*}
		&\E \wh{Q}\left[\left\{\varphi\in\Phi: \Ln(\varphi,\varphi^\star)\ge M\epsilon_{1,n}\right\}\right]\\&\le	\frac{1}{M\epsilon_{1,n}}\mathbb{E} \inf _{q \in \mathcal{F}_{\varphi}}\left\{\int \Ln_n\left(\varphi, \varphi^{\star}\right) \mathrm{d} q(\varphi)+\frac{\lambda_1C_1^2}{n}+\wh\lambda_1 \mathcal{D}_\alpha(q \| p)\right\}\\&\le\frac{1}{M\epsilon_{1,n}} \E \left\{\int \Ln_n\left(\varphi, \varphi^{\star}\right) \mathrm{d} \rho_{1,n}(\varphi)+\frac{\lambda_1C_1^2}{n}+\wh\lambda_1 \mathcal{D}_\alpha(\rho_{1,n} \| p)\right\}\\&\le\frac{1}{M\epsilon_{1,n}}\left\{ \epsilon_{1,n}+\frac{\lambda_1C_1^2}{n}+\E \wh\lambda_1\epsilon_{1,n}/\lambda_1\right\}\\&\lesssim 1/M,
	\end{flalign*}where the last term follows by applying the definition of $\wh\lambda_1$ and Assumption \ref{ass:prior_varphi}(ii).

\end{proof}

\begin{proof}[Corollary \ref{cor:cut_post2}]
	Let $\epsilon_n\ge0$ be such that $\epsilon_n\rightarrow0$ as $n\rightarrow\infty$. From Markov's inequality, 
	$$
	\E \wh{Q}\left[\left\{\theta\in\Theta: \Mn(\theta,\theta^\star)\ge M\epsilon_{n}\right\}\right]\le \E  \frac{1}{M\epsilon_{n}}\int\Mn(\theta,\theta^\star)\dt\wh{q}(\theta). 
	$$ Since $\Ln(\varphi,\varphi^\star)\ge0$, we have 
	\begin{flalign*}
		\E \wh{Q}\left[\left\{\theta\in\Theta: \Mn(\theta,\theta^\star)\ge M\epsilon_{n}\right\}\right]\le& \frac{1}{M\epsilon_{n}}\E  \int\Mn(\theta,\theta^\star)\dt\wh{q}(\theta)\\\le & \frac{1}{M\epsilon_{n}}\E  \left[\int\left\{\Ln(\varphi,\varphi^\star)+\Mn(\theta,\theta^\star)\right\}\dt\wh{q}(\varphi)\dt\wh{q}(\eta\mid\varphi)\right].
	\end{flalign*}
	Applying the arguments in Lemmas \ref{lem:PAC-Bayes1}-\ref{lem:PAC-Bayes2} to the two terms on the RHS of the above then yields
	\begin{flalign*}
		&\E \wh{Q}\left[\left\{\theta\in\Theta: \Mn(\theta,\theta^\star)\ge M\epsilon_{}\right\}\right]\\\le & \frac{1}{M\epsilon_{n}}\E  \left[\int\left\{\Ln(\varphi,\varphi^\star)+\Mn(\theta,\theta^\star)\right\}\dt\wh{q}(\varphi)\dt\wh{q}(\eta\mid\varphi)\right]\\\le& \frac{1}{M\epsilon_n}\E \bigg{[}\int\left\{\Ln_n(\varphi,\varphi^\star)+\Mn_n(\theta,\theta^\star)\right\}\dt\wh{q}(\varphi)\dt\wh{q}(\eta\mid\varphi)\\&+\frac{\mathcal{D}_\alpha\{\wh{q}(\varphi)\|p(\varphi)\}}{\lambda_1}+\lambda_2^{-1}\int \mathcal{D}_{\alpha}\{\wh{q}(\eta\mid\varphi)\|p(\eta\mid\varphi)\}\dt \wh{q}(\varphi)\bigg{]}+\frac{\lambda_1C_1^2}{M\epsilon_nn}+\frac{\lambda_2C_2^2}{M\epsilon_nn}.
	\end{flalign*}However, from the definition of $p^\star_\cut(\theta\mid w,z)=\wh{q}(\varphi)\wh{q}(\eta\mid\varphi)$ in \eqref{cut_varphi} , 
	\begin{flalign*}
		&\E \wh{Q}\left[\left\{\theta\in\Theta: \Mn(\theta,\theta^\star)\ge M\epsilon_{}\right\}\right]\\\le& \frac{1}{M\epsilon_n}\E \bigg{[}\int\left\{\Ln_n(\varphi,\varphi^\star)+\Mn_n(\theta,\theta^\star)\right\}\dt\wh{q}(\varphi)\dt\wh{q}(\eta\mid\varphi)\\&+\lambda_1\mathcal{D}_\alpha\{\wh{q}(\varphi)\|p(\varphi)\}+\lambda_2\int \mathcal{D}_{\alpha}\{\wh{q}(\eta\mid\varphi)\|p(\eta\mid\varphi)\}\dt \wh{q}(\varphi)\bigg{]}+\frac{\lambda_1C_1^2}{n}+\frac{\lambda_2C_2^2}{n}\\&\le \frac{1}{M\epsilon_n}\E\bigg{\{} \inf_{q_1\in\mathcal{F}_\varphi, q_2\in\mathcal{F}_{\eta\mid\varphi}}\bigg{[}\int\left\{\Ln_n(\varphi,\varphi^\star)+\Mn_n(\theta,\theta^\star)\right\}\dt q_1(\varphi)\dt q_2(\eta\mid\varphi)\\&+\frac{\mathcal{D}_\alpha\{q_1(\varphi)\|p(\varphi)\}}{\lambda_1}+\lambda_2^{-1}\int \mathcal{D}_{\alpha}\{q_2(\eta\mid\varphi)\|p(\eta\mid\varphi)\}\dt q_1(\varphi)\bigg{]}\bigg{\}}+\frac{\lambda_1C_1^2}{M\epsilon_nn}+\frac{\lambda_2C_2^2}{M\epsilon_nn}.
	\end{flalign*}Taking $q_1(\varphi)=\rho_{1,n}(\varphi)$ and $q_2(\eta\mid\varphi)=\rho_{2,n}(\eta\mid\varphi)$ as in Assumptions \ref{ass:prior_varphi}-\ref{ass:prior_eta}, we have that 
	\begin{flalign*}
		&\E \wh{Q}\left[\left\{\theta\in\Theta: \Mn(\theta,\theta^\star)\ge M\epsilon_{n}\right\}\right]\nonumber\\\le& \frac{1}{M\epsilon_n}\E \inf_{q_1\in\mathcal{F}_\varphi, q_2\in\mathcal{F}_{\eta\mid\varphi}}\bigg{[}\int\left\{\Ln_n(\varphi,\varphi^\star)+\Mn_n(\theta,\theta^\star)\right\}\dt q_1(\varphi)\dt q_2(\eta\mid\varphi)\nonumber\\&+\lambda_1\mathcal{D}_\alpha\{q_1(\varphi)\|p(\varphi)\}+\lambda_2\int \mathcal{D}_{\alpha}\{q_2(\eta\mid\varphi)\|p(\eta\mid\varphi)\}\dt q_1(\varphi)\bigg{]}+\frac{\lambda_1 C_1^2}{M\epsilon_nn}+\frac{\lambda_2 C_2^2}{M\epsilon_nn}\nonumber\\\le& \frac{1}{M\epsilon_n}\left\{\int \Ln(\varphi,\varphi^\star)\dt\rho_{1,n}(\varphi)+\int \Mn(\theta,\theta^\star)\dt\rho_{1,n}(\varphi)\dt\rho_{2,n}(\eta\mid\varphi) \right\}+\frac{\lambda_1C_1^2}{M\epsilon_nn}+\frac{\lambda_2C_2^2}{M\epsilon_nn}\nonumber\\&+\frac{\mathcal{D}_\alpha\{\rho_{1,n}(\varphi)\|p(\varphi)\}}{M\epsilon_n\lambda_1}+\frac{\int\mathcal{D}_\alpha\{\rho_{2,n}(\eta\mid\varphi)\|p(\eta\mid\varphi)\}\dt\rho_{1,n}(\varphi)}{M\epsilon_n\lambda_2}.
	\end{flalign*}Then, applying Assumption \ref{ass:prior_varphi}-\ref{ass:prior_eta} yields
	\begin{flalign}
		&\E \wh{Q}\left[\left\{\theta\in\Theta: \Mn(\theta,\theta^\star)\ge M\epsilon_{}\right\}\right]\nonumber\\\le& \frac{1}{M\epsilon_n}\left\{\int \Ln(\varphi,\varphi^\star)\dt\rho_{1,n}(\varphi)+\int \Mn(\theta,\theta^\star)\dt\rho_{1,n}(\varphi)\dt\rho_{2,n}(\eta\mid\varphi) \right\}+\frac{\lambda_1 C_1^2}{M\epsilon_nn}+\frac{\lambda_2 C_2^2}{M\epsilon_nn}\nonumber\\&+\frac{\mathcal{D}_\alpha\{\rho_{1,n}(\varphi)\|p(\varphi)\}}{M\epsilon_n\lambda_1}+\frac{\int\mathcal{D}_\alpha\{\rho_{2,n}(\eta\mid\varphi)\|p(\eta\mid\varphi)\}\dt\rho_{1,n}(\varphi)}{M\epsilon_n\lambda_2}\nonumber\\\le&\frac{1}{M\epsilon_n}\left\{\epsilon_{1,n}+\epsilon_{2,n}\right\}+\frac{\lambda_1 C_1^2}{M\epsilon_nn}+\frac{\lambda_2 C_2^2}{M\epsilon_nn}+\frac{\epsilon_{1,n}}{M\epsilon_n}+\frac{\int\mathcal{D}_\alpha\{\rho_{2,n}(\eta\mid\varphi)\|p(\eta\mid\varphi)\}\dt\rho_{1,n}(\varphi)}{M\epsilon_n\lambda_2}\label{eq:new_joint_eq}.
	\end{flalign}
	The last term in \eqref{eq:new_joint_eq} must be handled on a case-by-case basis depending on the value of $\alpha$. We break this up into two cases: $\alpha\ge1$ and $\alpha\in(0,1)$. \\
	
	\noindent\textbf{Case 1: $\alpha\ge1$.} From the properties of KL divergence, 
	$$
	\int \KL\{\rho_{2,n}(\eta\mid\varphi)\|p(\eta\mid\varphi)\}\dt \rho_{1,n}(\varphi)\le\KL\{\rho_{1,n}\times\rho_{2,n}\|\rho_{1,n}\times p(\eta\mid\varphi)\} ,
	$$so that under Assumption \ref{ass:prior_eta}(iii) 
	$$
	\KL\{\rho_{1,n}\times\rho_{2,n}\|\rho_{1,n}\times p(\eta\mid\varphi)\} \le \lambda_2\epsilon_n.
	$$Hence, the last term in \eqref{eq:new_joint_eq} satisfies  
	\begin{flalign}
	\frac{1}{M\epsilon_n\lambda_2}\int \KL(\rho_{2,n}(\eta\mid\varphi)\|p(\eta\mid\varphi))\dt \rho_{1,n}(\varphi)&\le\frac{1}{M\epsilon_n\lambda_2}\KL\{\rho_{1,n}\times\rho_{2,n}\|\rho_{1,n}\times p(\eta\mid\varphi)\} \nonumber\\&\le \frac{\max\{\epsilon_{1,n},\epsilon_{2,n}\}}{M\epsilon_n}
    \le  \frac{1}{M}.\label{eq:new1}
	\end{flalign}
	Alternatively, if we have that $\alpha>1$, we apply Assumption \ref{ass:prior_varphi}(iv) to deduce
	\begin{flalign}
		\label{eq:new2}
		\frac{\lambda_2}{M\epsilon_n}\int \mathcal{D}_\alpha(\rho_{2,n}(\eta\mid\varphi)\|p(\eta\mid\varphi))\dt \rho_{1,n}(\varphi)\le\frac{\max\{\epsilon_{1,n},\epsilon_{2,n}\}}{M\epsilon_n}.
	\end{flalign}
	
	\noindent\textbf{Case 2: $0<\alpha<1$.} For this case, we note that \eqref{eq:new_joint_eq} must be modified as 
	\begin{flalign*}
		&\E \wh{Q}\left[\left\{\theta\in\Theta: \Mn(\theta,\theta^\star)\ge M\epsilon_{}\right\}\right]\nonumber\\\le& \frac{1}{M\epsilon_n}\left\{\int \Ln(\varphi,\varphi^\star)\dt\rho_{1,n}(\varphi)+\int \Mn(\theta,\theta^\star)\dt\rho_{1,n}(\varphi)\dt\rho_{2,n}(\eta\mid\varphi) \right\}+\frac{\lambda_1 C_1^2}{M\epsilon_nn}+\frac{\lambda_2 C_2^2}{M\epsilon_nn}\nonumber\\&+\E \frac{\wh\lambda_1}{M\epsilon_n}\mathcal{D}_\alpha\{\rho_{1,n}(\varphi)\|p(\varphi)\}+\E \frac{\wh\lambda_2}{M\epsilon_n}\int\mathcal{D}_\alpha\{\rho_{2,n}(\eta\mid\varphi)\|p(\eta\mid\varphi)\}\dt\rho_{1,n}(\varphi)\nonumber\\\le&\frac{(\epsilon_{1,n}+\epsilon_{2,n})}{M\epsilon_n}+\frac{\lambda_1 C_1^2}{M\epsilon_nn}+\frac{\lambda_2 C_2^2}{M\epsilon_nn}\\&+\E \frac{\wh\lambda_1}{M\epsilon_n}\mathcal{D}_\alpha\{\rho_{1,n}(\varphi)\|p(\varphi)\}+\E \frac{\wh\lambda_2}{M\epsilon_n}\int\mathcal{D}_\alpha\{\rho_{2,n}(\eta\mid\varphi)\|p(\eta\mid\varphi)\}\dt\rho_{1,n}(\varphi).
	\end{flalign*}From Assumption \ref{ass:prior_varphi}(ii), and Assumption \ref{ass:prior_eta}(ii), respectively, 
	\begin{flalign}
		\label{eq:new3}
		\E \frac{\wh\lambda_1}{M\epsilon_n}\mathcal{D}_\alpha\{\rho_{1,n}(\varphi)\|p(\varphi)\}+\E \frac{\wh\lambda_2}{M\epsilon_n}\int\mathcal{D}_\alpha\{\rho_{2,n}(\eta\mid\varphi)\|p(\eta\mid\varphi)\}\dt\rho_{1,n}(\varphi)\lesssim \frac{\epsilon_{1,n}+\epsilon_{2,n}}{M\epsilon_n}. 
	\end{flalign}

	\noindent Applying equations \eqref{eq:new1}-\eqref{eq:new3} into equation \eqref{eq:new_joint_eq} we arrive at the bound:
	\begin{flalign*}
		\E \wh{Q}\left[\left\{\theta\in\Theta: \Mn(\theta,\theta^\star)\ge M\epsilon_{}\right\}\right]\lesssim&\frac{(\epsilon_{1,n}+\epsilon_{2,n})}{M\epsilon_n}+\frac{\lambda_1}{M\epsilon_nn}+\frac{\lambda_2}{M\epsilon_nn}.
	\end{flalign*}Taking $\epsilon_n=\max\{\epsilon_{1,n},\epsilon_{2,n}\}$, we have that $(\epsilon_{1,n}+\epsilon_{2,n})/\epsilon_n\lesssim C$, while the hypothesis of the result require $\lambda_1/n\epsilon_{n}\rightarrow0$ and $\lambda_2/n\epsilon_{n}\rightarrow0$, so that for $M$ large enough we have that $\lambda_j/n\epsilon_{n}\lesssim 1/M$ for $j=1,2$, and the result follows.
\end{proof}

\subsection{Verification of Assumptions \ref{ass:prior_varphi} and \ref{ass:prior_eta}: Biased Means}\label{sec biased means appendix}
In this section, we verify Assumptions \ref{ass:prior_varphi} and  \ref{ass:prior_eta} for the biased means example considered in the main paper.
For the biased means example, assume that 
\begin{flalign*}
    \E[z_i]=\varphi^\star,\quad \E[w_i\mid\varphi^\star]=(\eta^\star+\varphi^\star),\quad \E[(z_i-\E[z_i])^2]=1=\E[(w_i-\E[w_i])^2].
\end{flalign*}
Then, we have the following definitions for the component losses in the OCCPs: 
\begin{align*}
\Ln_{n_1}(\varphi) &= \frac{1}{n} L(z \mid \varphi) 
= -\frac{1}{n} \log p (z \mid \varphi) \\
&= \frac{1}{n_1} \left( \frac{n_1}{2} \log(2\pi) + \frac{1}{2} \sum_{i=1}^{n_1} (z_i - \varphi)^2 \right) 
= \frac{1}{2} \log(2\pi) + \frac{1}{2n_1} \sum_{i=1}^{n_1} (z_i - \varphi)^2, \\
\Ln(\varphi) &= \E \Ln_{n_1}(\varphi) = \frac{1}{2} \log(2\pi) + \frac{1}{2n_1} \sum_{i=1}^{n_1} \E(z_i - \varphi)^2 = \frac{1}{2} \log(2\pi) + \frac{1}{2}\{1+(\varphi^\star-\varphi)^2\}. \\
\Mn_{n_2} (\theta) &= \frac{1}{n_2} M(w \mid \theta) = - \frac{1} {n_2} \log p(w \mid \theta) \\
&= \frac{1} {n_2}  \left( \frac{n_2}{2}\log(2\pi) 
+ \frac{1}{2}\sum_{j=1}^{n_2} (w_j - (\varphi + \eta))^2 \right)
= \frac{1}{2}\log(2\pi) 
+ \frac{1}{2n_2}\sum_{j=1}^{n_2} (w_j - (\varphi + \eta))^2 \\
\Mn(\theta) &= \E \Mn_{n_2} (\theta) 
=  \frac{1}{2}\log(2\pi) + \frac{1}{2n_2}\sum_{j=1}^{n_2} \E (w_j - (\varphi + \eta))^2 \\&= \frac{1}{2} \log(2\pi) + \frac{1}{2}\left[1+\{(\varphi^\star+\eta^\star)-(\varphi+\eta)\}\right].
\end{align*}
Therefore $\Ln(\varphi, \varphi^\star) = \Ln(\varphi) - \Ln(\varphi^*) = (\varphi^\star-\varphi)^2$ and $$\Mn(\theta, \theta^\star) = \Mn(\theta) - \Mn(\theta^*) = (\varphi^\star-\varphi)^2+(\eta^\star-\eta)^2+2(\eta^\star-\eta)(\varphi^\star-\varphi).$$ 
In addition,
\begin{align*}
&D_\alpha \{ q(\varphi) \| p(\varphi) \} \\
&=  \begin{cases}
\dfrac{\alpha \log v_0}{2(\alpha-1)}  - \dfrac{\log v_\varphi}{2}  + \dfrac{ \log  \{\alpha v_0 + (1-\alpha) v_\varphi \}}{2(1-\alpha)} + \dfrac{ \alpha (\mu_\varphi-\mu_0)^2}{2 \{\alpha v_0 + (1-\alpha) v_\varphi\} } & \text{ if $\alpha \neq 1$}, \\
\left[ \log v_0 - \log v_\varphi + \{(\mu_\varphi - \mu_0)^2 + v_\varphi \}/v_0  - 1 \right] /2 & \text{ if $\alpha = 1$}.
\end{cases} 
\end{align*}
and 
\begin{align*}
& D_{\eta \mid \varphi} \{ q(\eta  \mid  \varphi) \| p(\eta \mid \varphi) \} = \\
& \begin{cases}
\frac{ \alpha \log v_b}{2(\alpha-1)} -\frac{\log v_{\eta \mid \varphi}}{2} + \frac{\log \{ \alpha v_b  + (1-\alpha)v_{\eta \mid \varphi} \}}{2(1-\alpha)} + \frac{ \alpha \mu_{\eta \mid \varphi}^2}{2\{\alpha v_b  + (1-\alpha)v_{\eta \mid \varphi}\}} & \text{ if $\alpha \neq 1$}, \\
\frac{1}{2} \left\{ \log v_b - \log v_{\eta \mid \varphi}  + (\mu_{\eta \mid \varphi}^2 + v_{\eta \mid \varphi} )/v_b - 1 \right\}  & \text{ if $\alpha = 1$}.
\end{cases}    
\end{align*}where we recall that $v_0$ and $\mu_0$ are the prior mean and variance for $p(\varphi)$, while the prior mean of $\eta\mid\varphi$ is zero, and the variance is $v_b$.
With the above, we are now ready to verify Assumptions 3 and 4 for this simple example.

Consider that $n_1\asymp n_2\asymp n$ for simplicity, and recall that we use the Gaussian variational family: $q(\varphi)\sim N(\varphi;\mu_{\varphi},v_{\varphi})$ and $q(\eta\mid\varphi)\sim N(\eta;\mu_{\eta\mid\varphi},v_{\eta\mid\varphi})$, where
$$
\mu_{\eta\mid\varphi}:=\mu_\eta + v_{\varphi, \eta} (\varphi - \mu_\varphi) / v_\varphi,\quad v_{\eta\mid\varphi}:=v_\eta- v^2_{\eta,\varphi}/v_{\varphi}.
$$

\subsubsection{$\alpha=1$: KL case}
\paragraph{Verification of Assumption \ref{ass:prior_varphi}.} Set $\rho_{1,n}(\varphi)=N(\varphi;\varphi^\star,v_0\log(n)/\sqrt{n})$. Then, we see that Assumption \ref{ass:prior_varphi}(i) is satisfied with $\epsilon_{1,n}\asymp \log(n)/\sqrt{n}$. Further, from the KL divergence calculation above, we see that (ii) is satisfied for $\epsilon_{1,n}=\log(n)/\sqrt{n}$. Choosing $\lambda_{1,n}\asymp \sqrt{n}$ and note that, for some $C$ large enough and an $n'$ large enough such that for all $n\ge n'$, 
$$
C\log (n)=C\lambda_{1,n}\epsilon_{1,n}\ge \log(n)+\epsilon_{1,n}.
$$
Lastly, note that these choices satisfy $\lambda_{1,n}/(n\epsilon_{1,n})\asymp1/\log(n)\rightarrow0$ as required. 

\paragraph{Verification of Assumption \ref{ass:prior_eta}.} Set  $\rho_{2,n}(\eta)=N(\eta;\eta^\star,v_b\log(n)/\sqrt{n})$, so that we again have $\int \Mn(\theta,\theta^\star)\dt\rho_{1,n}(\varphi)\dt\rho_{2,n}(\eta\mid\varphi)\lesssim \log(n)/\sqrt{n}$, verifying Assumption \ref{ass:prior_eta}(i) with $\epsilon_{2,n}=\log(n)/\sqrt{n}=\epsilon_{1,n}$. To verify Assumption \ref{ass:prior_eta}(iv), the pertinent assumption in this case, we instead verify the slightly weaker condition based on $\int \dt \rho_{1,n}(\varphi)\KL \{ \rho_{2,n}(\eta  \mid  \varphi) \| p(\eta \mid \varphi) \}$. In particular, using the KL divergence in this case, we see that the integral in question becomes, setting $v_{\eta,\varphi}=0$, and up to a  $\log\log(n)$ term,
\begin{flalign*}
&\int \dt \rho_{1,n}(\varphi)\KL \{ \rho_{2,n}(\eta  \mid  \varphi) \| p(\eta \mid \varphi) \}
\\&\asymp\int \dt\rho_{1,n}(\varphi)\frac{1}{2} \left\{ -\log \frac{\log(n)}{\sqrt{n}}  + ({\eta^\star }^2 + \log n/\sqrt{n} )/v_b - 1 \right\}
\\&\lesssim \log (n)+\log(n)/\sqrt{n}
\end{flalign*} Set $\lambda_{2,n}=\sqrt{n}$ and $\epsilon_{2,n}=\log(n)/\sqrt{n}$. Then, similar to the verification of Assumption \ref{ass:prior_varphi}, Assumption \ref{ass:prior_eta} (iv) is satisfied for some  $C>0$ and and $n'$ large enough so that for all $n\ge n'$, such that $
C\log (n)=C\lambda_{2,n}\epsilon_{2,n}\ge \log(n)+\epsilon_{2,n}
$, which also satisfies the condition that $\lambda_{j,n}/(n\epsilon_{k,n})\rightarrow\infty$ with $j,k\in\{1,2\}$. 

\subsubsection{$\alpha\ne1$}
All that remains to verify the satisfaction of our assumptions is to investigate the cases where $\alpha\ne1$. We only do so for Assumption \ref{ass:prior_varphi} as the arguments for verifying Assumption \ref{ass:prior_eta} are nearly identical under our choices for $\rho_{1,n}$ and $\rho_{2,n}$. 
\paragraph{Verification of Assumption \ref{ass:prior_varphi}.}
It is without loss to assume that $\wh{q}(\varphi)=N(\varphi;\varphi_n,L_n)$, for some non-random sequence $\|\varphi_n\|<\infty$, and some $L_n>0$ such that $L_n\rightarrow0$ faster than $1/\log(n)$. In the case of Assumption \ref{ass:prior_varphi}, the fraction in question becomes
\begin{flalign*}
&\frac{\KL\{\wh{q}(\varphi)\|p(\varphi)\}}{\mathcal{D}_\alpha\{\wh{q}(\varphi)\|p(\varphi)\}} =
\frac{
\displaystyle
\log\!\bigl(\tfrac{v_{0}}{L_n}\bigr)
\;+\;
\frac{(\varphi_n-\mu_{0})^{2}+L}{v_{0}}
\;-\;1}
{
\displaystyle
\frac{1}{1-\alpha}\,
\log\!\frac{\alpha v_{0}+(1-\alpha)L}{v_{0}^{\alpha}}
\;-\;\log (L_n)
\;+\;\frac{\alpha(\varphi_n-\mu_{0})^{2}}
       {\alpha v_{0}+(1-\alpha)(L)}
}.
\end{flalign*}
Write
\[
F(n)\;=\;
\frac{
\log\!\bigl(v_{0}/L_n\bigr)
\;+\;\frac{(\varphi_n-\mu_{0})^{2}+L_n}{v_{0}}
\;-\;1
}{
\frac{1}{1-\alpha}\,\log\!\frac{\alpha v_{0}+(1-\alpha)L_n}{v_{0}^{\alpha}}
\;-\;\log L
\;+\;\frac{\alpha(\varphi_n-\mu_{0})^{2}}{\alpha v_{0}+(1-\alpha)L_n}}
\,.
\]
 As \(n\to\infty\), \(L\to0\) the numerator and denominator grow like $-\log L$ 
and for for large \(n\) one has  
   \[
   F(n)
   \;=\;
   1
   \;+\;
   \frac{A - B}{-\,\log L_n \;+\;B}
   \;+\;o\!\bigl(\tfrac1{\log n}\bigr),
   \]
   where \(A\) and \(B\) are the finite constants
   \[
     A = \log v_{0} + \frac{(\varphi_n-\mu_{0})^{2}}{v_{0}}-1,
     \qquad
     B = \frac{1}{1-\alpha}\,\log\!\frac{\alpha v_{0}}{v_{0}^{\alpha}}
             +\frac{\alpha(\varphi_n-\mu_{0})^{2}}{\alpha v_{0}}.
   \]
As $n\rightarrow\infty$, \(F(n)\to1\). Further, for \(n\) large enough such that $|\log L|+B\ge 2|A-B|$, then
\[
F(n)\le 1+\tfrac12,
\]so that the assumption is satisfied with $W_n^{\alpha}=2$, for all $n$ large enough and all $\{0<\alpha<1\}\cap\{\alpha>1\}$.

\vskip 0.2in
\bibliography{ref}

\end{document}